\newtheorem{theorem}{Theorem}
\newtheorem{example}[theorem]{Example}
\newtheorem{lemma}[theorem]{Lemma}
\newtheorem{proposition}[theorem]{Proposition}
\newtheorem{remark}[theorem]{Remark}
\newenvironment{proof}[1][Proof]{\textbf{#1.} }{\ \rule{0.5em}{0.5em}}
\newcommand*{\fd}
[2]{\mathchoice{\frac{\delta#1}{\delta#2}}
  {\delta #1/\delta#2}{\delta#1/\delta#2}{\delta#1/\delta#2}}
\begin{document}

\title{Homogeneous Hamiltonian operators\\
and the theory of coverings}
\author{Pierandrea Vergallo$^1$, \qquad Raffaele Vitolo$^{1,2}$ \\[3mm]
  $^{1,2}$ Department of Mathematics and Physics \textquotedblleft E. De
  Giorgi\textquotedblright ,\\
  Universit\`a del Salento, Lecce, Italy\\
  and  $^2$Istituto Nazionale di  Fisica Nucleare -- Sez.\ Lecce
  \\
  \texttt{pierandrea.vergallo@unisalento.it}
  \\
  \texttt{raffaele.vitolo@unisalento.it}
}
\date{\small
  \itshape{Dedicated to the memory of A.M. Vinogradov, with gratitude.}}
\maketitle

\begin{abstract}
  A new method (by Kersten, Krasil'shchik and Verbovetsky), based on the theory
  of differential coverings, allows to relate a system of PDEs with a
  differential operator in such a way that the operator maps conserved
  quantities into symmetries of the system of PDEs. When applied to a
  quasilinear first-order system of PDEs and a Dubrovin--Novikov homogeneous
  Hamiltonian operator the method yields conditions on the operator and the
  system that have interesting differential and projective geometric
  interpretations.

\bigskip

\noindent MSC: 37K05, 37K10, 37K20, 37K25.

\bigskip

\noindent Keywords: integrable systems, Hamiltonian PDE, homogeneous
Hamiltonian operator, covering of PDEs.
\end{abstract}

\section{Introduction}
\label{sec:intro}

Hamiltonian methods have become a standard part of the theory of nonlinear
Partial Differential Equations (PDEs) and integrable systems (see, for example,
\cite{NovikovManakovPitaevskiiZakharov:TS}). Determining if a certain PDE (or a
system of PDEs) has a Hamiltonian formulation yields important information on
its integrability.

A fundamental contribution to the problem of finding out Hamiltonian
formulations for PDEs has been presented in
\cite{KerstenKrasilshchikVerbovetsky:HOpC}. The necessary condition for a PDE
(or a system of PDEs) to admit a Hamiltonian formulation is presented as the
existence of a \emph{shadow of symmetry in a covering} of the given PDE.  The
shadow of symmetry can be identified with a differential operator that maps
cosymmetries of the PDE into symmetries of the PDE.  The above operators are
called \emph{variational bivectors} on the PDE. Such a definition is more
general than the property of being a Hamiltonian operator.

The concept of covering is due to A.M. Vinogradov \cite{vinogradov84:_categ}
(see also \cite{krasilshchikvinogradov84,krasilshchik84:_nonloc}); shadows of
symmetries are a natural object in the theory of coverings. In
\cite{KerstenKrasilshchikVerbovetsky:HOpC} two natural coverings were
associated with any PDE, the \emph{tangent covering} and the \emph{cotangent
  covering}. The invariance of the coverings under a wide class of coordinate
changes is proved in \cite{KV11}. Hamiltonian operators (local and non-local)
can be found as shadows of symmetries on the cotangent covering. Finding a
shadow of symmetry is a calculation of the same nature as the calculation of
generalized symmetries for a system of PDEs.

When looking for Hamiltonian operators for a certain PDE (or system of PDEs) it
is much easier to look for variational bivectors first. Indeed, the equation
that variational bivectors fulfill is a linear overdetermined system in their
coefficients, and that has higher chances of being solved with respect to a
direct search for a Hamiltonian formulation of the PDE, which is generally
nonlinear (see \eqref{eq:17} below).

Differential-geometric properties of the Hamiltonian formulation of nonlinear
PDEs were studied since the early times of integrable systems. In particular,
B.A. Dubrovin and S.P. Novikov introduced homogeneous Hamiltonian operators
(HHOs in this paper, for the sake of brevity) of first order \cite{DN83} and
higher order \cite{DubrovinNovikov:PBHT} as one of the essential ingredients of
the Hamiltonian formalism for PDEs. Such operators are form-invariant with
respect to transformations of the field variables and have interesting geometric
properties (see the long review \cite{mokhov98:_sympl_poiss}) which continue to
be discovered \cite{casati19:_hamil,FPV17:_system_cl,FPV16,FPV14}.

It was understood long ago that the conditions under which a system of PDEs
admits a Hamiltonian formulation by a given HHO are of differential-geometric
nature. In particular, S. Tsarev proved \cite{tsarev85:_poiss_hamil} that a
first-order quasilinear system of PDEs, or hydrodynamic type system, in
two independent variables:
\begin{equation}
  \label{eq:3}
  u^i_t = V^i_j(u)u^j_x,
\end{equation}
where $u^i=u^i(t,x)$ and $i=1,\ldots,n$, admits a Hamiltonian formulation
\begin{equation}
  \label{eq:17}
  u^i_t = A^{ij}\left(\fd{H(u)}{u^j}\right)
\end{equation}
through a first-order HHO
\begin{equation}
  \label{eq:5}
A^{ij} = g^{ij}(u)\partial_x + \Gamma^{ij}_k u^k_x
\end{equation}
(in the non-degenerate case $\det(g^{ij})\neq 0$) if and only if the following
conditions are satisfied:
\begin{equation}
  \label{eq:7}
  g^{ik}V^j_k = g^{jk}V^i_k,\qquad \nabla_i V^k_j = \nabla_j V^k_i,
\end{equation}
where $\nabla_k$ is the covariant derivative with
respect to the Levi-Civita connection determined by $g_{ij}$ (the inverse
matrix of $g^{ij}$).

However, the original proof made use of the \emph{existence} of a Hamiltonian
$H$. In general, it is not easy to predict the existence of a Hamiltonian in a
certain class. This made a generalization of the above conditions to higher
order HHOs quite difficult to achieve.

Recently, the geometric properties of third-order HHOs were studied
\cite{FPV16,FPV14}. It was realized that a result like \eqref{eq:7} for
third-order HHOs was missing, while several examples of hydrodynamic type
systems~\eqref{eq:3} admitting a Hamiltonian formulation by means of a
third-order HHO were known at the time, mostly from
Witten--Dijkgraaf--Verlinde--Verlinde (WDVV) equations (see the discussion in
\cite{FPV17:_system_cl}). Using the cotangent covering of the hydrodynamic type
system and the equation for variational bivectors on third-order HHOs led to
geometric conditions that reduced to a linear algebraic system in either the
coefficients of the operator $A$ or the velocity matrix $(V^i_j)$. In
particular, we found an interesting class of hydrodynamic type systems of conservation laws that are
determined by the choice of a third-order local HHO (see
\cite{m.v.19:_bi_hamil_orien_assoc} for the nonlocal case).

The aim of this paper is twofold.

First of all we would like to present the range of applicability of the method
of tangent and cotangent coverings in the problem of determining the conditions
on HHOs of a certain type to be variational bivectors for a system of
PDEs. These are necessary conditions for the HHOs to be the Hamiltonian
operators of the system of PDEs.

The key idea is that, since the coverings and the operator are both invariant
with respect to point transformations of the field variables, we will get
invariant conditions involving both the system of PDEs and the coefficients of
the operator. We will call such conditions the \emph{compatibility conditions}
between HHOs and the system of PDEs.

More precisely, if the system of PDEs is form-invariant with respect to
transformations of the  dependent variables: $\bar{u}^i=\bar{u}^i(u^j)$ then
the compatibility conditions will be tensor equations on the manifold which is
parametrized by the dependent variables.

Since hydrodynamic type systems are form-invariant with respect to the above
transformations of the dependent variables, our method allows to systematically
derive geometrically invariant conditions of compatibility between HHOs of any
order and hydrodynamic type systems.

In this paper we will use our method to reprove Tsarev's conditions of
compatibility between local first-order HHOs and hydrodynamic type systems. We
will show that the Hamiltonian is not needed at all in order to derive the
conditions.

We will also consider non-local operators through the method of non-local odd
variables, introduced by Kersten and Krasil'shchik
\cite{KrasilshchikKersten:SROpCSDE}, and obtain standard compatibility
conditions between nonlocal HHOs of Ferapontov type (see \cite{F95:_nl_ho} and
Section \ref{sec:non-local-hamilt}) and hydrodynamic type systems.

Then, we will extend the method to derive compatibility conditions between
second and third order HHOs and hydrodynamic type systems. While the third
order case has already been successfully investigated in
\cite{m.v.19:_bi_hamil_orien_assoc,FPV17:_system_cl}, the second order case is
new and surprisingly leads to a new class of systems which are highly likely to
be integrable, and whose geometric properties will be studied in the future.

In a conclusive section we will indicate further directions where the
above method seem to lead to potentially interesting results.

\section{Coverings and Hamiltonian operators}
\label{sec:tang-cover-hamilt}

In this Section we recall the method of cotangent covering to find Hamiltonian
operators for (systems of) PDEs
\cite{KerstenKrasilshchikVerbovetsky:HOpC}. Consider an evolutionary system of
PDEs in two independent variables $t$, $x$ and $n$ unknown functions
$u^i=u^i(t,x)$, $i=1$, \dots, $n$, of the form
\begin{equation}  \label{eq:63}
  F^i = u^i_t -f^i(t,x,u^j,u^j_{x},u^j_{xx},\ldots) = 0,\qquad
  i=1, \dots, n.
\end{equation}
The above equation admits a Hamiltonian formulation if and only if there exists
a linear differential operator $A=(A^{ij})$,
$A^{ij}=a^{ij\sigma}\partial_\sigma$ and a linear functional $H=\int h\ dx$
such that the above system can be rewritten as
\begin{equation}  \label{eq:2}
u^i_t = A^{ij}\left(\fd{H}{u^j}\right).
\end{equation}
Here, both the coefficients $a^{ij\sigma}$ of the operator and the density $h$
can depend on $t$, $x$, $u^i$, $u^i_x$, \dots.  A further property is required
to the operator $A$ for the equation~\eqref{eq:2} to be Hamiltonian. Namely,
the operation between densities:
\begin{equation}
  \label{eq:20}
  \{F,G\}_A = \int\fd{F}{u^i}A^{ij}\fd{G}{u^j}\, dx
\end{equation}
is required to be a Poisson bracket. This is equivalent to the following
requirements on $A$: it must be skew-adjoint, $A^*=-A$, and its Schouten
bracket must vanish, $[A,A]=0$. An operator fulfilling the last two properties
is said to be \emph{Hamiltonian}. It can be proved that, for an Hamiltonian
system of PDEs, the Poisson bracket of two conserved densities is a conserved
density.

The presence of a Poisson bracket allows to reproduce the mathematical setting
and some results that are more traditionally developed for Hamiltonian systems
in mechanics. In particular, \emph{integrability} of a system of PDEs holds if
there exists an infinite sequence of conserved quantities in involution with
respect to the above Poisson bracket. Magri's theorem yields conditions under
which such sequences can be generated \cite{Magri:SMInHEq}.

It is well-known that symmetries of~\eqref{eq:63} are vector functions
$\varphi=(\varphi^i)$ such that $\ell_F(\varphi)=0$ when $F=0$, where $\ell_F$
is the linearization, or Fr\'echet derivative, of $F$. Conservation laws are
equivalence classes of $1$-forms $\omega=a\, dt+b\,dx$ that are closed modulo
$F=0$, up to total divergencies; they are uniquely represented by the
generating function $\psi_i = \fd{b}{u^i}$.  Such a vector function is a
cosymmetry, \emph{i.e.}  $\ell^*_F(\psi)=0$ when $F=0$.

After the above definitions, it can be proved \cite{KVV17,Olver:ApLGDEq} that a
Hamiltonian operator $A$ is a variational bivector for the equation $F=0$,
i.e. it intertwines the operators $\ell_F$ and $\ell^*_F$, in the sense that
\begin{equation}
  \label{eq:69}
  \ell _{F}\circ A=A^*\circ \ell _{F}^{\ast }.
\end{equation}
This implies that a Hamiltonian operator maps conserved quantities into
symmetries. The above property can be taken as a necessary condition for an
operator $A$ to be the Hamiltonian operator for a system of PDEs $F=0$.

The equation~\eqref{eq:69} can be reformulated as follows. Introduce new
variables $p_{i}$ in such a way that $\partial _{x}\psi_i$ corresponds to
$p_{i,x}$, $\partial _{x}^{2}\psi_i$ corresponds to $p_{i,xx}$ and so on. A
bijective correspondence between operators and vector functions which are
linear in $p_i$ and their derivatives can be achieved just by evaluating the
operator on the new variable(s) $p_i$. Let us introduce the \emph{cotangent
  covering} \cite{KerstenKrasilshchikVerbovetsky:HOpC} (also known as adjoint
system):
\begin{equation}
\mathcal{T}^*\colon \left\{
\begin{array}{l}
F=0, \\
\ell _{F}^{\ast }(\mathbf{p})=0.
\end{array}%
\right..
\label{eq:551}
\end{equation}%
Working on the cotangent covering allows us to annihilate the
right-hand sides of~\eqref{eq:69}. This implies the following result.

\begin{theorem}\label{sec:comp-cond-hamilt}
  \cite{KerstenKrasilshchikVerbovetsky:HOpC} A linear differential operator $A$
  in total derivatives is a variational bivector~\eqref{eq:63} if and only if
  the following equation holds:
\begin{equation}  \label{eq:71}
\ell _{F}(A(\mathbf{p}))=0
\end{equation}
on the cotangent covering~\eqref{eq:551}.
\end{theorem}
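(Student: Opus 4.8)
The engine of the argument is the bijective correspondence recalled just above the statement: an operator in total derivatives is completely determined by, and reconstructible from, the vector function that is linear in the $p_i$ obtained by evaluating it on the cotangent variables $\mathbf{p}$. The plan is to first use this dictionary to rewrite the defining property \eqref{eq:69} as an identity between $p$-linear vector functions. Composition of operators corresponds to substituting one evaluated expression into the other, so that $\ell_F\circ A$ corresponds to $\ell_F(A(\mathbf{p}))$ and $A^*\circ\ell_F^*$ corresponds to $A^*(\ell_F^*(\mathbf{p}))$. Thus \eqref{eq:69} is equivalent to the identity
\[
  \ell_F(A(\mathbf{p})) = A^*(\ell_F^*(\mathbf{p}))
\]
holding off the covering, i.e. as an identity in all jet variables. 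This reformulation is what makes both implications transparent.

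For the forward implication I would simply restrict the displayed identity to the cotangent covering \eqref{eq:551}. There the relation $\ell_F^*(\mathbf{p})=0$ holds together with all of its total derivatives, since the covering is the differential ideal generated by $F$ and $\ell_F^*(\mathbf{p})$. Because $A^*$ is an operator in total derivatives, $A^*(\ell_F^*(\mathbf{p}))$ is a combination of total derivatives of $\ell_F^*(\mathbf{p})$, all of which vanish on \eqref{eq:551}. Hence the right-hand side is annihilated and \eqref{eq:71} follows; this is exactly the ``annihilation of the right-hand side of \eqref{eq:69}'' announced before the statement.

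For the converse I would assume \eqref{eq:71} on the covering and run the dictionary backwards. Being $p$-linear and vanishing on \eqref{eq:551}, the vector function $\ell_F(A(\mathbf{p}))$ must, after using $F=0$ to eliminate the $t$-derivatives, be a total-derivative combination of $\ell_F^*(\mathbf{p})$; that is, $\ell_F(A(\mathbf{p}))=B(\ell_F^*(\mathbf{p}))$ for some operator $B$, equivalently $\ell_F\circ A=B\circ\ell_F^*$ on the equation. The delicate point, and the step I expect to be the main obstacle, is to identify this factoring operator as $B=A^*$, thereby recovering \eqref{eq:69} exactly rather than only the weaker statement that $A$ sends cosymmetries to symmetries. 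I would handle this by an integration-by-parts (Green's formula) argument: pairing $\ell_F(A(\mathbf{p}))$ against a second copy of the cotangent variables and transposing derivatives reroutes everything through the adjoint, forcing the factoring operator to coincide with $A^*$. Equivalently, it shows that $\ell_F\circ A$ is self-adjoint, and since $(\ell_F\circ A)^*=A^*\circ\ell_F^*$ this is precisely \eqref{eq:69}.

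Finally, I would remark that the whole statement is the linearity-in-$\mathbf{p}$ incarnation of the fact that $A(\mathbf{p})$ is a shadow of a symmetry on the cotangent covering, so that the real content is the faithfulness of the dictionary between the operator identity \eqref{eq:69} and the single vector equation \eqref{eq:71} on \eqref{eq:551}.
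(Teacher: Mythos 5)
First, for calibration: the paper itself does not prove this theorem --- it is imported from \cite{KerstenKrasilshchikVerbovetsky:HOpC}, and the only in-text justification is the sentence preceding the statement, namely that passing to the cotangent covering annihilates the right-hand side of \eqref{eq:69}. Your forward implication reproduces exactly that argument, and correctly --- with the caveat that \eqref{eq:69} is required to hold only modulo the equation $F=0$, not ``as an identity in all jet variables'' as you assert; the off-shell discrepancy is a total-derivative combination of $F$, which also vanishes on \eqref{eq:551}, so the implication survives, but this overstatement matters when you try to run the dictionary backwards.

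The converse contains a genuine gap at precisely the step you flag. From \eqref{eq:71} you correctly extract $\ell_F\circ A = B\circ\ell_F^*$ on the equation, using that the covering ideal is generated by the total derivatives of $F$ and of $\ell_F^*(\mathbf{p})$. But the Green's-formula transposition you invoke does not ``force $B=A^*$'': pairing with a second covariable and moving derivatives across yields $A^*\circ\ell_F^*=\ell_F\circ B^*$ on the equation, i.e.\ the same factorization with the roles of the two operators interchanged; equivalently, the defect $S=A-B^*$ satisfies $\ell_F\circ S=-S^*\circ\ell_F^*$, and nothing in your argument makes $S$ vanish. Indeed, $B=A^*$ cannot be a formal consequence of \eqref{eq:71} alone: replacing $A$ by $A+C\circ\ell_F^*$ leaves $A(\mathbf{p})$, hence \eqref{eq:71}, unchanged on the covering, yet shifts the factoring operator by $\ell_F\circ C$ while shifting $A^*$ by $\ell_F\circ C^*$, and these differ whenever $C\neq C^*$. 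So any complete proof must use structural input your sketch never invokes: the normalization of $A$ as an operator in $D_x$ with coefficients in the internal coordinates of the evolutionary system (so that the operator/$p$-linear-function dictionary is honestly bijective), and, crucially, the \emph{normality} of equations of the form \eqref{eq:63}, which is what \cite{KerstenKrasilshchikVerbovetsky:HOpC} exploit, via the two-sided Green-formula bookkeeping, to identify the factoring operator and establish the bijection between $p$-linear shadows and variational bivectors. In summary: your necessity argument coincides with the paper's one-line remark; your sufficiency argument is an honest sketch whose pivotal identification is asserted rather than proved, and it cannot be closed by transposition alone.
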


We stress that the equation $\ell_F(A(\mathbf{p}))=0$ is a necessary condition
for $A$ to be the Hamiltonian operator of a system of PDEs $F=0$: it does not
imply that $A^*=-A$ or $[A,A]=0$ in general. However, the equation
$\ell_F(A(\mathbf{p}))=0$ is linear and it does not contain any unknown
Hamiltonian density, so it is easier to solve than finding a Hamiltonian
formulation for the system of PDEs. Moreover, the condition is quite strong, so
that in most cases it happens that all variational bivectors are also
Hamiltonian operators. See \cite{KVV17} for a discussion.

We observe that the idea of representing differential operators with linear
functions of new variables was also used in \cite{Getzler:DTHOpFCV} to compute
the Hamiltonian cohomology of a wide class of operators. In order to do that,
one is led to assume that $p_i$ and its derivatives are \emph{anticommuting
  (odd) variables}.

In order to show the simplicity and the effectiveness of the method, let us
consider the KdV equation $u_t=u_{xxx}+uu_x$. Its linearization is
$\ell_F = \partial_t - \partial_{3x} - u_x - u\partial_x$, and its adjoint is
$\ell^*_F = - \partial_t + \partial_{3x} + u\partial_x$. The cotangent
  covering is
\begin{equation}
\left\{%
\begin{array}{l}
u_t=u_{xxx}+uu_x \\
p_t=p_{xxx}+up_x%
\end{array}%
\right.
\end{equation}
Then the well-known Hamiltonian operators $A_1=\partial_x$ and $A_2=\frac{1}{3}%
(3\partial_{xxx} + 2u\partial_{x} + u_x)$ can be rewritten as functions which are linear
in the new coordinate $p$:
\begin{equation}
A_1= p_x,\quad A_2 = \frac{1}{3}(3p_{3x} + 2up_x + u_xp).
\end{equation}
The above $A_1$, $A_2$ are the only linear functions $A$ which fulfill $%
\ell_F(A)=0$; this calculation is easily performed by pen and paper.

The invariance of the cotangent covering under coordinate changes of the type
$\bar{u}^i=\bar{u}^i(u^j)$ is crucial in our subsequent results, and it is
proved in \cite{KV11}.

Non-local operators also fit in the above scheme: an operator with a summand of
the type $N^{ij}\psi_j = S^i(u^k,u^k_x,u^k_{xx},\ldots)\partial_x^{-1}
\big(S^j(u^k,u^k_x,u^k_{xx},\ldots)\psi_j\big)$ (\emph{weakly non-local
  operator} in the terminology of \cite{MalNov2001}) can be represented as
$S^{i}r$, where $r$ is a new (odd) variable such that
$r_x=S^jp_j$. Such variables are potentials of conservation laws on the
cotangent covering whose flux and density are linear with respect to odd
variables.

\section{First-order homogeneous Hamiltonian operators}
\label{sec:comp-cond-first}

In this section we use the necessary conditions of
Theorem~\ref{sec:comp-cond-hamilt} to compute geometrically-invariant
compatibility conditions between a hydrodynamic type system and a first-order
local HHO. We will reprove well-known results by Tsarev
\cite{tsarev85:_poiss_hamil} without using any Hamiltonian density.

The linearization and adjoint linearization of the system
\begin{equation}\label{eq11}F^i = u^i_t -
V^i_ju^j_x\end{equation}
are:
\begin{align}
  \label{eq:1}
  &\ell_F(\varphi) = \partial_t\varphi^i - V^i_{j,k}u^j_x\varphi^k
  - V^i_j\partial_x\varphi^j,
  \\
  &\ell_F^*(\varphi) = -\partial_t\psi_i - \psi_kV^k_{j,i}u^j_x
  + \partial_x(\psi_kV^k_i).
\end{align}
The cotangent covering (some Authors call it \emph{the adjoint system}, see
\cite{Ibragimov:NCT}) is determined by the following system of PDEs:
\begin{equation}
  \label{eq:15}
\mathcal{T}^*\colon  \left\{
\begin{array}{l}
u^i_t = V^i_ju^j_x, \\
p_{i,t} = (V^k_{i,j}u^j_x - V^k_{j,i}u^j_x) p_k
  + V^k_ip_{k,x}.
\end{array}%
\right.
\end{equation}
Here and in what follows an index like $,i$ stands for the partial derivative
of the indexed object  with respect to $u^i$.

\subsection{Local operators}
\label{sec:hamilt-oper}

A local first-order HHO can be identified
with the linear vector function
\begin{equation}
  \label{eq:22}
  A(\mathbf{p})^i = g^{ij}p_{j,x} + \Gamma^{ij}_k u^k_x p_j.
\end{equation}
where $g_{ij}$ is a nondegenerate matrix: $\det(g^{ij})\neq 0$.  Under a
transformation $\bar{u}^i=\bar{u}^i(u^j)$ the coefficients $g^{ij}$ transform
as a symmetric contravariant $2$-tensor, and
$\Gamma^{i}_{jk} = - g_{js}\Gamma^{si}_k$ (where $g_{ij}$ is the inverse matrix
of $g^{ij}$) transform as the Christoffel symbols of a linear connection.

We assume that $A^*=-A$ and $[A,A]=0$. This is equivalent to well-known
conditions:
\begin{subequations}\label{eq:42}
\begin{gather}
  \label{eq:21}
  g^{ij}=g^{ji},
\\ \label{eq:6}
  g^{ij}_{,k} = \Gamma^{ij}_k + \Gamma^{ji}_k,
\\ \label{eq:14}
  g^{ik}\Gamma^{jl}_k = g^{jk}\Gamma^{il}_k,
\\ \label{eq:16}
R[g]^{ij}_{kl}=\Gamma^{ij}_{l,k} - \Gamma^{ij}_{k,l}
+ \Gamma^i_{ks}\Gamma^{sj}_l - \Gamma^j_{ks}\Gamma^{si}_l = 0,
\end{gather}
\end{subequations}
where $R[g]$ is the curvature of the metric $g$.

\begin{proposition}
  The equation $\ell_F(A(\mathbf{p}))=0$, where $A$ is a homogeneous operator
  and $F=0$ is a hydrodynamic type system \eqref{eq:3}, is equivalent to the
  following system:
  \begin{align}
  \label{eq:100}
  & V^i_kg^{kj}-V^j_kg^{ki}= 0,
    \\
    \label{eq:101}
    \begin{split}
      & g^{ij}_kV^k_m+g^{ik}(V^j_{k,m}-V^j_{m,k})+g^{ik}V^j_{k,m}
      +\Gamma^{ik}_mV^j_k
      \\
      & \hphantom{ciao}
      -V^i_{m,k}g^{kj}-V^i_kg^{kj}_m-V^i_k\Gamma^{kj}_m=0,
    \end{split}
    \\
    &\label{eq:103}
      g^{ik}\left(V^j_{k,h}-V^j_{h,k}\right)+
      \Gamma^{ij}_kV^k_h-\Gamma^{kj}_hV^i_k = 0,
    \\
    \label{eq:201}
    \begin{split}
      &g^{ik}\left(V^j_{k,ml}+V^j_{k,lm}-V^j_{m,kl}-V^j_{l,km}\right)
      \\
      &\hphantom{ciao}
      +\Gamma^{ij}_{m,k}V^k_l+\Gamma^{ij}_{l,k}V^k_m+\Gamma^{ij}_kV^k_{l,m}
      +\Gamma^{ij}_kV^k_{m,l}
      \\
      &\hphantom{ciao}
      +\Gamma^{ik}_lV^j_{k,m}+\Gamma^{ik}_mV^j_{k,l}-\Gamma^{ik}_lV^j_{m,k}
      -\Gamma^{ik}_mV^j_{l,k}
      \\
      &\hphantom{ciao}
      -\Gamma^{kj}_mV^i_{l,k}-\Gamma^{kj}_lV^i_{m,k}-\Gamma^{kj}_{m,l}V^i_k
      -\Gamma^{kj}_{l,m}V^i_k=0.
    \end{split}
  \end{align}
\end{proposition}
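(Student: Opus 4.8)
The plan is to compute the left-hand side of $\ell_F(A(\mathbf{p}))=0$ directly on the cotangent covering and then split the resulting expression into independent components according to the jet variables $p_j$, $p_{j,x}$, $p_{j,xx}$, $p_{j,3x}$ (and their products with the various $u$-derivatives $u^k_x$, $u^k_{xx}$, \dots). Since the covering equations~\eqref{eq:15} allow us to eliminate all $t$-derivatives, every coefficient in the final expression will be a function of the $u^i$ and their $x$-derivatives only, and the coefficient of each independent monomial in the odd variables must vanish separately. I expect exactly four such independent families, producing the four displayed equations~\eqref{eq:100}--\eqref{eq:201}.

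First I would substitute $\varphi^i = A(\mathbf{p})^i = g^{ij}p_{j,x} + \Gamma^{ij}_k u^k_x p_j$ into the linearization~\eqref{eq:1}, being careful that $g^{ij}$ and $\Gamma^{ij}_k$ depend on $u$, so that $\partial_t$ and $\partial_x$ act on them through the chain rule, producing factors $g^{ij}_{,k}u^k_t$, $\Gamma^{ij}_{k,l}u^l_x$, and so on. The term $\partial_t\varphi^i$ is the one that generates $t$-derivatives of $p_j$ and of $u^k$; here is where I would invoke the covering relations, replacing $u^i_t$ by $V^i_ju^j_x$ and $p_{i,t}$ by the right-hand side of the second equation in~\eqref{eq:15}. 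After this elimination the whole expression is polynomial in the jet coordinates $u^k_x, u^k_{xx}, p_j, p_{j,x}, p_{j,xx}, p_{j,3x}$.

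Next I would collect terms by their total order in the $p$-variables and by the number of $x$-derivatives landing on $p$. The highest-derivative term $g^{ik}p_{k,3x}$ never appears because $\ell_F$ carries only one $\partial_x$; the coefficient of $p_{k,xx}$ (equivalently, the purely second-order-in-$p$ symbol) yields the algebraic symmetry~\eqref{eq:100}; the coefficient of $p_{k,x}$ together with a single $u^m_x$ yields~\eqref{eq:101}; the coefficient of $p_k$ together with $u^m_x$ but after antisymmetrization gives~\eqref{eq:103}; and the coefficient of $p_k$ carrying the highest mixed $u$-derivatives $u^l_xu^m_x$ and $u^m_{xx}$ gives the long third equation~\eqref{eq:201}. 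Each of these is obtained by reading off the coefficient of a fixed independent monomial in the odd jet variables and setting it to zero.

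The main obstacle will be the bookkeeping: because both $g^{ij}$ and $\Gamma^{ij}_k$ depend on $u$, a single application of $\partial_x$ or $\partial_t$ spreads each term into several, and the covering substitution for $p_{i,t}$ reintroduces $\Gamma$- and $V$-type terms that have to be carefully matched against the contributions coming from differentiating the coefficients of $A$. The delicate points are (i) correctly distinguishing symmetric from antisymmetric parts in the index pair so that~\eqref{eq:100} and~\eqref{eq:103} emerge cleanly, and (ii) keeping track of which monomials in $u^l_x u^m_x$ versus $u^m_{xx}$ are genuinely independent, since an integration-by-parts-style reorganization (using that mixed partials of the coefficients commute, e.g. $g^{ij}_{,ml}=g^{ij}_{,lm}$) is what allows the second-derivative terms to be grouped into the symmetrized combination appearing in~\eqref{eq:201}. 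Once the independence of these monomials is established, the equivalence is immediate, since the vanishing of the whole expression is then equivalent to the simultaneous vanishing of each coefficient.
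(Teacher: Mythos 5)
Your overall strategy is precisely the paper's: substitute the linear function \eqref{eq:22} into the linearization \eqref{eq:1}, eliminate all $t$-derivatives using the covering \eqref{eq:15} and its differential consequences (for $u^k_{tx}$ and $p_{j,tx}$), and then set to zero the coefficients of the independent monomials that are linear in the odd variables. However, your assignment of monomials to equations contains a concrete error. Since $A(\mathbf{p})$ is homogeneous of weight one under the scaling $x\mapsto\lambda x$ and $\ell_F$ adds exactly one derivative, every surviving monomial has weight two, so the complete list of independent families is $p_{j,xx}$, $u^m_x p_{j,x}$, $u^h_{xx}p_j$, and $u^l_x u^m_x p_j$ --- four families, but not the four you name. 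In particular, a monomial of the form $u^m_x p_k$ (weight one) cannot occur, so your claimed source of \eqref{eq:103} --- ``the coefficient of $p_k$ together with a single $u^m_x$, after antisymmetrization'' --- is empty; if you carried out your plan literally, this step could not be executed. In fact \eqref{eq:103} is simply the coefficient of $u^h_{xx}p_j$, read off directly with no antisymmetrization whatsoever. Correspondingly, $u^h_{xx}p_j$ and $u^l_x u^m_x p_j$ are \emph{independent} monomials producing two \emph{separate} equations, whereas you lump them together as the source of \eqref{eq:201}: that equation comes only from the coefficient of $u^l_x u^m_x p_j$, and the one manipulation genuinely required there is symmetrization in the pair $(l,m)$, forced by the symmetry of $u^l_x u^m_x$ --- not an ``integration-by-parts-style'' reorganization via commuting mixed partials, which plays no role.

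The error is recoverable: an honest expansion (as in the paper's formula \eqref{eq1.1}) forces the correct grouping automatically, and your identifications of \eqref{eq:100} with the $p_{j,xx}$ coefficient and \eqref{eq:101} with the $u^m_x p_{j,x}$ coefficient are correct, as is your observation that no $p_{j,3x}$ term can appear (the composition of the first-order operator $A$ with the first-order $\ell_F$ is second order in $\mathbf{p}$). But as written, your decomposition has one empty family and one family that conflates two independent ones, so the plan needs the monomial bookkeeping repaired before it constitutes a proof of the stated equivalence.
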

\begin{proof}
We have:
\begin{equation}
  \label{eq:23}
  \begin{split}
  \ell_F(A(\mathbf{p})) =& \partial_t(g^{ij}p_{j,x} +
 \Gamma^{ij}_k u^k_x p_j)
  \\
  & - V^i_{l,k}u^l_x(g^{kj}p_{j,x} + \Gamma^{kj}_h u^h_x p_j)
  \\
  & - V^i_k\partial_x(g^{kj}p_{j,x} + \Gamma^{kj}_h u^h_x p_j)
\end{split}
\end{equation}
We must keep into account the system~\eqref{eq:15} and its differential
consequences. So, $u_{tx}$ should be replaced by $(V^i_j u^j_x)_x$ and
similarly for $p_{i,tx}$. We obtain:
\begin{equation}
  \begin{split}\label{eq1.1}
    \ell_F(A(\mathbf{p})) & = ( - V^i_kg^{kj} + V^j_k g^{ki})p_{j,xx}
  \\
    & + \big( g^{ij}_kV^k_l u^l_x + g^{ik}(V^j_{k,m}u^m_x - V^j_{m,k}u^m_x)
    + g^{ik}V^j_{k,m}u^m_x+ \Gamma^{ik}_h u^h_xV^j_k
    \\
    &\hphantom{ciao} - V^i_{l,k}u^l_xg^{kj}  - V^i_kg^{kj}_h u^h_x
    - V^i_k\Gamma^{kj}_h u^h_x\big) p_{j,x}
  \\
  & + \big( g^{ik}(V^j_{k,ml}u^l_xu^m_x+V^j_{k,m}u^m_{xx} -
  V^j_{m,kl}u^l_xu^m_x-V^j_{m,k}u^m_{xx})
  \\
    &\hphantom{ciao}  + \Gamma^{ij}_{k,h}V^h_l u^l_x u^k_x
   +\Gamma^{ij}_{k}V^k_{l,m}u^m_x u^l_x+\Gamma^{ij}_kV^k_lu^l_{xx}
  \\
  & \hphantom{ciao}
   + \Gamma^{ik}_l u^l_x(V^j_{k,h}u^h_x - V^j_{h,k}u^h_x)
  - V^i_{l,k}u^l_x\Gamma^{kj}_h u^h_x
   \\
  & \hphantom{ciao}
  - V^i_k( \Gamma^{kj}_{h,l}u^l_x u^h_x + \Gamma^{kj}_h u^h_{xx})\Big)p_j
\end{split}
\end{equation}
Since the above expression is linear with respect to $p_i$, $p_{i,x}$,
$p_{i,xx}$ and its coefficients are polynomial with respect to $u^i_x$,
$u^i_{xx}$, the result follows.
\end{proof}

\begin{lemma}
  Equation \eqref{eq:201} is a differential consequence of
   \eqref{eq:103} and \eqref{eq:100}.
\end{lemma}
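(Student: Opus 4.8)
The plan is to read~\eqref{eq:201} as a symmetrised derivative of~\eqref{eq:103}, up to correction terms that the structural identities~\eqref{eq:42} and the algebraic condition~\eqref{eq:100} force to vanish. Denote by $E^{ij}_h$ the left-hand side of~\eqref{eq:103}, so that~\eqref{eq:103} reads $E^{ij}_h=0$. First I would form the sum $E^{ij}_{m,l}+E^{ij}_{l,m}$: it is symmetric in $l,m$ and vanishes as a differential consequence of~\eqref{eq:103}. Its leading part is $g^{ik}(V^j_{k,ml}+V^j_{k,lm}-V^j_{m,kl}-V^j_{l,km})$, which is exactly the first line of~\eqref{eq:201}, and the terms $\Gamma^{ij}_kV^k_{l,m}+\Gamma^{ij}_kV^k_{m,l}$ and $-\Gamma^{kj}_{m,l}V^i_k-\Gamma^{kj}_{l,m}V^i_k$ it produces already coincide verbatim with entries of~\eqref{eq:201}.

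Writing $D$ for the difference between~\eqref{eq:201} and $E^{ij}_{m,l}+E^{ij}_{l,m}$, it remains to show $D=0$. Its terms are of a few sorts: the metric-derivative contributions $g^{ik}_{,l}(V^j_{k,m}-V^j_{m,k})$ and $g^{ik}_{,m}(V^j_{k,l}-V^j_{l,k})$; ``transposition'' terms such as $\Gamma^{ij}_{m,k}V^k_l$ against $\Gamma^{ij}_{k,m}V^k_l$, in which the derivative index of a Christoffel symbol has moved onto a contracted slot (and similarly for the $\Gamma^{kj}V^i$ terms); and the whole third line $\Gamma^{ik}_lV^j_{k,m}+\Gamma^{ik}_mV^j_{k,l}-\Gamma^{ik}_lV^j_{m,k}-\Gamma^{ik}_mV^j_{l,k}$ of~\eqref{eq:201}. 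I would use metric compatibility~\eqref{eq:6} to rewrite $g^{ik}_{,l}$ as $\Gamma^{ik}_l+\Gamma^{ki}_l$, and flatness~\eqref{eq:16} to replace each antisymmetrised Christoffel derivative $\Gamma^{ij}_{m,k}-\Gamma^{ij}_{k,m}$ by $-\Gamma^i_{ks}\Gamma^{sj}_m+\Gamma^j_{ks}\Gamma^{si}_m$. Invoking~\eqref{eq:103} once more to eliminate the surviving antisymmetrised first derivatives of $V$ then reduces all of $D$ to an expression quadratic in the Christoffel symbols and linear in $V$, with no derivatives of $V$.

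The decisive step is the vanishing of this quadratic remainder, and it is here that~\eqref{eq:14} and~\eqref{eq:100} are genuinely needed. Converting each second-order symbol by $\Gamma^{ab}_c=-g^{ad}\Gamma^b_{dc}$, the remainder splits into two families. In the first the metric contracts the two Christoffel symbols, producing monomials of the form $\Gamma^i_{kp}g^{pd}\Gamma^j_{dm}V^k_l$ together with their relabellings; these cancel in pairs by the torsion-free symmetry $\Gamma^b_{dc}=\Gamma^b_{cd}$ of the Levi-Civita connection, that is~\eqref{eq:14}, combined with the symmetry of $g$. In the second family the metric instead raises the lower index of $V$, so that the monomials carry the factor $g^{kd}V^p_k=V^p_kg^{kd}$; these cancel in pairs precisely because that object is symmetric in $p,d$, which is the content of~\eqref{eq:100}. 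Once both families are gone we have $D=0$, so~\eqref{eq:201} equals $E^{ij}_{m,l}+E^{ij}_{l,m}$ and therefore vanishes whenever~\eqref{eq:103} holds.

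I expect the organisation of this quadratic remainder to be the only real obstacle: roughly a dozen $\Gamma\Gamma V$ monomials must be tracked, and the cancellation works only because~\eqref{eq:14} and~\eqref{eq:100} dispose of complementary halves of the sum, so that neither hypothesis can be omitted. A shorter but less self-contained route uses~\eqref{eq:16} at the outset: in flat coordinates for $g$ one has $\Gamma=0$, so~\eqref{eq:103} becomes $V^j_{k,h}=V^j_{h,k}$, and differentiating this gives $V^j_{k,ml}=V^j_{m,kl}$ and $V^j_{k,lm}=V^j_{l,km}$, whereupon the flat form of~\eqref{eq:201} collapses to $0$. The passage back to arbitrary coordinates rests on the invariance of $\ell_F(A(\mathbf{p}))=0$ under $\bar u^i=\bar u^i(u^j)$; making this rigorous for the single component~\eqref{eq:201} amounts once more to the computation sketched above.
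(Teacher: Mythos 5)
Your proposal follows essentially the same route as the paper's proof: you subtract the symmetrised differential consequence $E^{ij}_{m,l}+E^{ij}_{l,m}$ of \eqref{eq:103} from \eqref{eq:201}, re-apply \eqref{eq:103} to eliminate the surviving antisymmetrised first derivatives of $V$, and kill the quadratic $\Gamma\Gamma V$ remainder using \eqref{eq:100} together with the Hamiltonian identities --- the paper does exactly this, merely keeping the curvature terms $R^{ij}_{km}V^k_l+R^{ij}_{kl}V^k_m$ explicit and invoking flatness \eqref{eq:16} at the final step, whereas you expand \eqref{eq:16} into $\Gamma\Gamma$ products earlier, a purely cosmetic reordering. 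Your flat-coordinate aside is also sound, and you correctly flag that it is not an independent shortcut, since transporting the single component \eqref{eq:201} back to general coordinates amounts to the same computation.
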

\begin{proof}
  Let us subtract a differential consequence of \eqref{eq:103} from equation
  \eqref{eq:201}:
\begin{align*}
      &g^{ik}\left(V^j_{k,ml}+V^j_{k,lm}-V^j_{m,kl}-V^j_{l,km}\right)
      \\
      &+\Gamma^{ij}_{m,k}V^k_l+\Gamma^{ij}_{l,k}V^k_m+\Gamma^{ij}_kV^k_{l,m}
      +\Gamma^{ij}_kV^k_{m,l}
      \\
      &+\Gamma^{ik}_lV^j_{k,m}+\Gamma^{ik}_mV^j_{k,l}-\Gamma^{ik}_lV^j_{m,k}
      -\Gamma^{ik}_mV^j_{l,k}
      \\
      &-\Gamma^{kj}_mV^i_{l,k}-\Gamma^{kj}_lV^i_{m,k}-\Gamma^{kj}_{m,l}V^i_k
      -\Gamma^{kj}_{l,m}V^i_k\\
      &-\left(g^{ik}(V^j_{k,m}-V^j_{m,k})
        +\Gamma^{ij}_kV^k_m-\Gamma^{kj}_mV^i_k\right)_l\\
      &-\left(g^{ik}(V^j_{k,l}-V^j_{l,k})
        +\Gamma^{ij}_kV^k_l-\Gamma^{kj}_lV^i_k\right)_m\\
  =&\left(\Gamma^{ij}_{m,k}-\Gamma^{ij}_{k,m}\right)V^k_l
     +\left(\Gamma^{ij}_{l,k}-\Gamma^{ij}_{k,l}\right)V^k_m\\
      &+ \Gamma^{kj}_l\left(V^i_{k,m}-V^i_{m,k}\right)+
        \Gamma^{ki}_l\left(V^j_{m,k}-V^j_{k,m}\right)
        \\
       & + \Gamma^{kj}_m\left(V^i_{k,l}-V^i_{l,k}\right)+
     \Gamma^{ki}_m\left(V^j_{l,k}-V^i_{k,l}\right)
\end{align*}
Using \eqref{eq:103} again to replace all terms containing $V^i_{j,k}$ the above
expression becomes
\begin{multline}\label{eq:44}
  \left(\left(\Gamma^{ij}_{m,k}-\Gamma^{ij}_{k,m}\right)
    +\Gamma^{j}_{sm}\Gamma^{si}_k-\Gamma^{i}_{sm}\Gamma^{sj}_k\right)V^k_l+\\
  \left(\left(\Gamma^{ij}_{l,k}-\Gamma^{ij}_{k,l}\right)
    +\Gamma^{j}_{sl}\Gamma^{si}_k-\Gamma^{i}_{sl}\Gamma^{sj}_k\right)V^k_m
  + T^{ij}_{lm}
\end{multline}
where
\begin{equation}
T^{ij}_{lm}=
\Gamma^{aj}_lg_{as}V^s_k\Gamma^{ki}_m
-\Gamma^{ki}_lg_{ks}V^s_a\Gamma^{aj}_m
+\Gamma^{kj}_mg_{ks}V^s_a\Gamma^{ai}_l
-\Gamma^{ki}_mg_{ks}V^s_a\Gamma^{aj}_l
\end{equation}
Using \eqref{eq:100} it is easy to show that $T^{ij}_{lm}=0$; then,
\eqref{eq:44} is equivalent to
\begin{equation}
  R^{ij}_{km}V^k_l + R^{ij}_{kl}V^k_m
\end{equation}
(note that $\Gamma^i_{sk}\Gamma^{sj}_l=\Gamma^j_{sl}\Gamma^{si}_k$)
which vanishes due to the Hamiltonian property of $A$.
\end{proof}

\begin{lemma}\label{lem2}
  Equation \eqref{eq:101} is a consequence of \eqref{eq:103} and \eqref{eq:100}.
\end{lemma}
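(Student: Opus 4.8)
The plan is to show that the left-hand side of \eqref{eq:101} is, after two applications of \eqref{eq:103}, nothing but the $u^m$-derivative of the left-hand side of \eqref{eq:100}, which vanishes identically because \eqref{eq:100} is a pointwise (tensorial) identity in the coefficients. The reason this is not visible at a glance is that \eqref{eq:101} mixes three different first-derivative-of-$V$ contributions, namely the antisymmetrized $g^{ik}(V^j_{k,m}-V^j_{m,k})$, the symmetric $g^{ik}V^j_{k,m}$, and the transposed $g^{kj}V^i_{m,k}$; the first and the third carry the ``wrong'' index order and must be traded for $\Gamma$-terms before any cancellation occurs.

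Concretely, I would first substitute \eqref{eq:103} (with $h=m$) to replace $g^{ik}(V^j_{k,m}-V^j_{m,k})$ by $-\Gamma^{ij}_kV^k_m+\Gamma^{kj}_mV^i_k$. The new term $\Gamma^{kj}_mV^i_k$ then cancels the last term $-V^i_k\Gamma^{kj}_m$ of \eqref{eq:101}, while the Hamiltonian identity \eqref{eq:6}, $g^{ij}_{,k}=\Gamma^{ij}_k+\Gamma^{ji}_k$, turns $g^{ij}_{,k}V^k_m-\Gamma^{ij}_kV^k_m$ into $\Gamma^{ji}_kV^k_m$. Next I would use \eqref{eq:103} a second time, now with $i$ and $j$ interchanged, to rewrite the transposed derivative as $-g^{kj}V^i_{m,k}=-g^{jk}V^i_{k,m}-\Gamma^{ji}_kV^k_m+\Gamma^{ki}_mV^j_k$; the freshly produced $-\Gamma^{ji}_kV^k_m$ then cancels the $\Gamma^{ji}_kV^k_m$ obtained in the previous step.

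At this stage only the terms $g^{ik}V^j_{k,m}+\Gamma^{ik}_mV^j_k+\Gamma^{ki}_mV^j_k-g^{jk}V^i_{k,m}-V^i_kg^{kj}_{,m}$ survive. Applying \eqref{eq:6} once more to collapse $\Gamma^{ik}_m+\Gamma^{ki}_m$ into $g^{ik}_{,m}$ and then reading the result through the Leibniz rule gives exactly $\partial_m(g^{ik}V^j_k)-\partial_m(g^{jk}V^i_k)=\partial_m\!\big(g^{ik}V^j_k-g^{jk}V^i_k\big)$, which is $-\partial_m$ of the left-hand side of \eqref{eq:100}. Since \eqref{eq:100} holds as an identity in the field variables, its $u^m$-derivative vanishes and \eqref{eq:101} follows. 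I expect the only delicate point to be the index bookkeeping in the two applications of \eqref{eq:103} and the two uses of \eqref{eq:6}; it is worth noting that, in contrast with the proof of the previous lemma, neither the flatness condition \eqref{eq:16} nor \eqref{eq:14} enters here.
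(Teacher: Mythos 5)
Your proof is correct and takes essentially the same route as the paper's: both use \eqref{eq:6} to trade the metric derivatives for $\Gamma$-terms and resolve \eqref{eq:101} into the two copies of \eqref{eq:103} (one with $i\leftrightarrow j$) plus the $u$-derivative of \eqref{eq:100}, the paper by adding and subtracting $g^{kj}V^i_{k,h}$ and $\Gamma^{ki}_hV^j_k$, you by substituting \eqref{eq:103} twice. Your index bookkeeping, the cancellation of $\Gamma^{kj}_mV^i_k$ against $-V^i_k\Gamma^{kj}_m$, and the final identification of the residue with $\partial_m\bigl(g^{ik}V^j_k-g^{jk}V^i_k\bigr)$ all check out.
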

\begin{proof}
  By \eqref{eq:6}, equation \eqref{eq:101} can be written in the following way:
\begin{multline}
  \Gamma^{ij}_kV^k_h+\Gamma^{ji}_kV^k_h+g^{ik}(V^j_{k,h}-V^j_{h,k})
  +g^{ik}V^j_{k,h} + \Gamma^{ik}_hV^j_k
  \\
  -g^{kj}V^i_{h,k}-\Gamma^{kj}_hV^i_k-\Gamma^{jk}_hV^i_k-V^i_k\Gamma^{kj}_h = 0.
\end{multline}
By adding and subtracting $g^{kj}V^i_{k,h}$ and $\Gamma^{ki}_hV^j_k$ in the
previous equation we realize that it is equivalent to the equation:
\begin{multline}
g^{ik}(V^j_{k,h}-V^j_{h,k})+\Gamma^{ij}_kV^k_h-\Gamma^{kj}_hV^i_k+\\
g^{jk}(V^i_{k,h}-V^i_{h,k})+\Gamma^{ji}_kV^k_h-\Gamma^{ki}_hV^j_k+\\
+\partial_h(g^{ik}V^j_k-g^{jk}V^i_k)=0,
\end{multline}
thus proving the lemma.
\end{proof} 

\begin{theorem}\label{thm2}
  Let us consider a local first-order HHO $A$ \eqref{eq:22} and a hydrodynamic
  type system~\eqref{eq:3}. Then, the compatibility conditions
  $\ell_F(A(\mathbf{p}))=0$ for the operator $A$ to be a Hamiltonian operator
  for the hydrodynamic type system~\eqref{eq:3} are equivalent to the following
  system:
\begin{enumerate}
\item $g^{ik}V^j_k=g^{jk}V^i_k$;
\item $\nabla^iV^j_k=\nabla^jV^i_k$.
\end{enumerate}
\end{theorem}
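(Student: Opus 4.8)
The plan is to combine the three preceding results and then rewrite the two surviving equations intrinsically in terms of the Levi-Civita connection of $g$. First I would apply the Proposition to replace $\ell_F(A(\mathbf{p}))=0$ by the four tensor equations \eqref{eq:100}, \eqref{eq:101}, \eqref{eq:103}, \eqref{eq:201}. The first Lemma and Lemma~\ref{lem2} show, using the Hamiltonian conditions \eqref{eq:42}, that \eqref{eq:201} and \eqref{eq:101} are consequences of \eqref{eq:100} and \eqref{eq:103}, so the whole system collapses to the pair \eqref{eq:100}, \eqref{eq:103}. Using $g^{ij}=g^{ji}$, equation \eqref{eq:100} is exactly statement~(1), $g^{ik}V^j_k=g^{jk}V^i_k$; equivalently the raised tensor $V^{ij}:=g^{ik}V^j_k$ is symmetric, and by lowering indices so is $V_{bc}:=g_{ba}V^a_c$. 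Since \eqref{eq:100} is statement~(1), it remains only to prove that, under \eqref{eq:100}, equation \eqref{eq:103} is equivalent to statement~(2).

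The core step is to put \eqref{eq:103} into covariant form, for which I would use the relation $\Gamma^{ij}_k=-g^{ia}\Gamma^j_{ak}$ between the coefficients of $A$ and the Christoffel symbols of $g$, the identity \eqref{eq:6} in the form $g^{ij}_{,k}=\Gamma^{ij}_k+\Gamma^{ji}_k$, metric compatibility $\nabla g=0$, and \eqref{eq:100}. The only awkward term is $g^{ik}V^j_{k,h}$, which I would write as $(g^{ik}V^j_k)_{,h}-g^{ik}_{,h}V^j_k$ and, using \eqref{eq:6}, identify with $\nabla_h(g^{ik}V^j_k)=g^{ik}\nabla_h V^j_k$ plus connection terms; these cancel against those coming from the remaining summands of \eqref{eq:103}, which reassemble into $-\nabla^i V^j_h=-g^{ik}\nabla_k V^j_h$. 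The identity I would verify by direct substitution is that the left-hand side of \eqref{eq:103} equals
\[
g^{ik}\bigl(\nabla_h V^j_k-\nabla_k V^j_h\bigr).
\]
As $\det(g^{ik})\neq 0$, equation \eqref{eq:103} is then equivalent to $\nabla_h V^j_k=\nabla_k V^j_h$, i.e.\ to the symmetry of $\nabla_a V^b_c$ in its derivative index and its covariant (lower) index.

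To match this with statement~(2) I would pass to the fully covariant tensor $N_{abc}:=\nabla_a V_{bc}$. Statement~(1) says $V_{bc}$ is symmetric, hence $N_{abc}=N_{acb}$ is symmetric in its last two slots. The covariant form of \eqref{eq:103} just obtained, after lowering the upper index of $V$, says that $N$ is symmetric under exchange of its first and third slots; while statement~(2), $\nabla^i V^j_k=\nabla^j V^i_k$, after lowering its two upper indices, says that $N$ is symmetric under exchange of its first and second slots. Any two of these three transpositions generate the full symmetric group $S_3$, so in the presence of statement~(1) each of \eqref{eq:103} and statement~(2) is equivalent to $N_{abc}$ being totally symmetric, hence to one another. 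This proves \eqref{eq:103}$\iff$(2) modulo (1) and completes the equivalence.

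The step I expect to be the main obstacle is the covariant rewriting of \eqref{eq:103}: one must keep careful track of which index of $V$ is differentiated and which is raised, and the collapse of the raw terms to the compact expression $g^{ik}(\nabla_h V^j_k-\nabla_k V^j_h)$ hinges on using \eqref{eq:6} exactly once to turn $g^{ik}V^j_{k,h}$ into $\nabla_h(g^{ik}V^j_k)$, together with the symmetry \eqref{eq:100}. Once this identity is established the passage to statement~(2) is purely symmetric-group bookkeeping on $N_{abc}$ and involves no further analysis.
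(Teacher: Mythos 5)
Your proposal is correct and follows essentially the same route as the paper: the Proposition plus the two lemmas collapse the system to \eqref{eq:100} and \eqref{eq:103}, and your key identity rewriting the left-hand side of \eqref{eq:103} as $g^{ik}\left(\nabla_h V^j_k-\nabla_k V^j_h\right)$ is exactly the displayed identity in the paper's proof of Theorem~\ref{thm2}. The only difference is expository: you make explicit the final symmetric-group step (passing, modulo condition (1) and nondegeneracy of $g$, from $\nabla_h V^j_k=\nabla_k V^j_h$ to the stated form $\nabla^i V^j_k=\nabla^j V^i_k$ via total symmetry of $\nabla_a V_{bc}$), which the paper leaves implicit.
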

\begin{proof}
  Indeed, using $g^{ik}V^j_k=g^{jk}V^i_k$ and its differential consequences we
  observe that
\begin{equation}
 g^{ik}\left(V^j_{k,h}-V^j_{h,k}\right)+\Gamma^{ij}_kV^k_h-\Gamma^{kj}_hV^i_k=g^{ik}(\nabla_hV^j_k-\nabla_kV^j_h)
\end{equation}
\end{proof}

\begin{remark}
  We stress that we proved Tsarev's Theorem without the need of a Hamiltonian
  density $h$ such that $V^i_j=\nabla^i\nabla_jh$ \cite{tsarev85:_poiss_hamil},
  even if we are not aware of examples where there exists a Hamiltonian
  operator for a hydrodynamic type system but there is no Hamiltonian density.
\end{remark}

\subsection{Non-local operators}
\label{sec:non-local-hamilt}

In this section we would like to find conditions of compatibility between
hydrodynamic type systems \eqref{eq:3} and first order
non-local HHOs of the type:
\begin{equation}
  \label{eq:26}
  B=g^{ij}\partial_x + \Gamma^{ij}_k u^k_x
  + W^i_k u^k_x \partial_x^{-1} W^j_h u^h_x,
\end{equation}
where $W^i_k = W^i_k(u^j)$.  Such operators have been introduced and studied in
full generality by Ferapontov; they have a beautiful geometric
characterization, see \cite{F95:_nl_ho} and references therein. The conditions
for the above operator to be Hamiltonian can be found in the same reference.

The way to introduce non-local (odd) variables in order to rewrite operators
as in the previous subsection makes use of new non-local variables on the
cotangent covering. Non-local variables are usually introduced as conservation
laws; it was proved in \cite{KerstenKrasilshchikVerbovetsky:GSDBTEq} that any
symmetry of a system of PDEs yields a conservation law on the cotangent
covering of the system that is linear with respect to odd variables.

Let us describe the construction of non-local variables in our case. Consider
the equality that defines the adjoint linearization:
\begin{equation}
  \label{eq:9}
  \langle\ell_F(\varphi),\psi\rangle - \langle\varphi,\ell_F^*(\psi)\rangle =
  \sum_{i=1}^n\partial_i(a^i).
\end{equation}
By a restriction to $F=0$, if $\varphi$ is a symmetry the first summand at the
left-hand side vanishes due to $\ell_F(\varphi)=0$.  If we lift the remaining
identity on the cotangent covering $\ell^*_F(\mathbf{p})=0$, we have a
conservation law on the right-hand side. Let us compute an explicit formula for
the conservation law.
\begin{multline}
  \label{eq:12}
  (\partial_t\varphi^i - V^i_{j,k}u^j_x\varphi^k -
  V^i_j \partial_x\varphi^j)\psi_i
  - \varphi^i( - \partial_t\psi_i + (V^k_{i,j} - V^k_{j,i})u^j_x \psi_k
  + V^k_i \partial_x\psi_k)\notag
  \\
  = \partial_t(\varphi^i\psi_i) - \partial_x(V^i_j\varphi^j\psi_i)
\end{multline}
So, the new non-local variable on the cotangent covering corresponding with each
symmetry $\varphi$ is denoted by $r$, where
\begin{equation}
  \label{eq:13}
  r_t = V^i_j\varphi^jp_i,\qquad r_x = \varphi^ip_i.
\end{equation}
The expression of $B$ in odd variables becomes
\begin{equation}\label{eq:34}
  B^{i}=g^{ij}p_{j,x}+\Gamma^{ij}_ku^k_xp_j+W^i_su^s_xr.
\end{equation}

\begin{theorem}\label{sec:non-local-operators-1}
  Let us consider a non-local first order HHO $B$~\eqref{eq:26}, whose
  non-local part is defined by a hydrodynamic type symmetry
  $\varphi^i = W^i_ju^j_x$, and the hydrodynamic type
  system~\eqref{eq11}. Then, the compatibility conditions
  $\ell_F(B(\mathbf{p}))=0$ for the operator $B$ to be a Hamiltonian operator
  for the hydrodynamic type system~\eqref{eq11} are equivalent to the following
  system:
\begin{enumerate}
\item $g^{ik}V^j_k=g^{jk}V^i_k$,
\item $\nabla^iV^j_k=\nabla^jV^i_k$.
\end{enumerate}
\end{theorem}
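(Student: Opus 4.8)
The plan is to exploit the splitting $B=A+N$, where $A(\mathbf{p})^i=g^{ij}p_{j,x}+\Gamma^{ij}_ku^k_xp_j$ is the local first-order HHO already treated in Subsection~\ref{sec:hamilt-oper} and $N^i=W^i_su^s_x\,r$ is the nonlocal tail written through the odd potential $r$ of~\eqref{eq:13}. Since $\ell_F$ is linear, $\ell_F(B(\mathbf{p}))=\ell_F(A(\mathbf{p}))+\ell_F(N)$, so the bulk of the computation is already done: by the Proposition and the Lemmas above the local part reduces to~\eqref{eq:100}, \eqref{eq:101}, \eqref{eq:103} and~\eqref{eq:201}. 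The whole point will be to understand the new term $\ell_F(N)$ and to show that it does not spoil the reduction to the two tensor conditions.

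First I would compute $\ell_F(N)$ on the cotangent covering, replacing $u^i_t$, $u^i_{tx}$, $r_t$ and $r_x$ by the values dictated by~\eqref{eq:15} and~\eqref{eq:13}. The result splits naturally into a part proportional to $r$ and a part proportional to the $p_a$. A direct check shows that the coefficient of $r$ equals exactly $\ell_F(\varphi)^i$ with $\varphi^i=W^i_su^s_x$, and this vanishes because, by hypothesis, $\varphi$ is a symmetry of the system; in particular its $u^m_{xx}$ term carries the commutator $[W,V]^i_m$, whose vanishing is the second-order part of $\ell_F(\varphi)=0$. This is why no new $u_{xx}p_j$ contribution is produced and the coefficient of $u^m_{xx}p_j$ remains equal to~\eqref{eq:103}. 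The surviving $p_a$-part of $\ell_F(N)$ contains only terms quadratic in the $u^s_x$, namely
\begin{equation*}
  \bigl(W^i_l V^j_b W^b_m - V^i_a W^a_l W^j_m\bigr)u^l_x u^m_x\, p_j,
\end{equation*}
so it modifies only the coefficient of $u^l_x u^m_x p_j$, i.e. equation~\eqref{eq:201}.

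Consequently the equations coming from the coefficients of $p_{j,xx}$, $p_{j,x}$ and $u^m_{xx}p_j$ are literally~\eqref{eq:100}, \eqref{eq:101} and~\eqref{eq:103}, exactly as in the local case; Lemma~\ref{lem2} makes~\eqref{eq:101} a consequence of the other two, and \eqref{eq:100} together with~\eqref{eq:103} yields conditions (1) and (2) precisely as in Theorem~\ref{thm2}. It therefore remains to handle the modified~\eqref{eq:201}. I would repeat the manipulation of the first Lemma above, subtracting the same differential consequences of~\eqref{eq:103}: in the local case this reduced the equation to $R[g]^{ij}_{km}V^k_l+R[g]^{ij}_{kl}V^k_m$, which vanished by flatness. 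Here $B$ is a Ferapontov operator, so the flatness condition~\eqref{eq:16} is replaced by the Gauss--Codazzi conditions of~\cite{F95:_nl_ho}; in particular the curvature is no longer zero but satisfies $R[g]^{ij}_{kl}=W^i_kW^j_l-W^i_lW^j_k$.

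The hard part, and the heart of the theorem, is the final cancellation. Substituting the Gauss relation, the local residual $R[g]^{ij}_{km}V^k_l+R[g]^{ij}_{kl}V^k_m$ is expressed through $W$ and $V$; using the commutation $VW=WV$ (the second-order part of the symmetry condition) one checks that it is exactly the opposite of the symmetrisation in $l,m$ of the new nonlocal coefficient displayed above. Thus the modified~\eqref{eq:201} becomes an identity once (1), (2), the Ferapontov conditions on $B$ and the symmetry property of $\varphi$ are imposed, and no further constraint survives. I expect the only genuinely delicate point to be the bookkeeping of this cancellation — matching index orderings and the factor coming from the symmetrisation of $u^l_x u^m_x$ — whereas the structural fact that the nonlocal tail contributes precisely the term built from $W^i_kW^j_l-W^i_lW^j_k$ contracted with $V$, mirroring the Gauss term, is what forces the compatibility conditions to coincide with the local ones (1) and (2).
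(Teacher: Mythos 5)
Your proposal is correct and follows essentially the same route as the paper's proof: the splitting $B=A+N$ with $N^i=W^i_su^s_x\,r$, the vanishing of the $r$-coefficient via the hypothesis that $\varphi^i=W^i_ju^j_x$ is a symmetry, the observation that only the coefficient of $u^l_xu^m_xp_j$ is modified, and the final cancellation of the residual $R^{ij}_{km}V^k_l+R^{ij}_{kl}V^k_m+W^i_lV^j_kW^k_m+W^i_mV^j_kW^k_l-V^i_kW^k_lW^j_m-V^i_kW^k_mW^j_l$ using Ferapontov's curvature condition together with $W^i_sV^s_j=V^i_sW^s_j$ are precisely the steps of the published argument. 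No gap.
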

\begin{proof}
  We have
\begin{equation}\label{eq:48}
\ell_F(\tilde{A}^i)=\ell_F(A^i)+\ell_F(W^i_su^s_xr),
\end{equation}
where
\begin{align*}
  \ell_F(W^i_su^s_xr)&=\partial_t(W^i_su^s_xr)
                       -V^i_{j,k}u^j_xW^k_su^s_xr-V^i_j\partial_x(W^j_su^s_xr)\\
&=W^i_{s,l}u^l_tu^s_xr+W^i_su^s_{xt}r+W^i_su^s_xr_t\\
&\hphantom{ciao}-V^i_{j,k}u^j_xW^k_su^s_xr\\
   &\hphantom{ciao}-V^i_jW^j_{s,l}u^l_xu^s_xr-V^i_jW^j_su^s_{xx}r
                       -V^i_jW^j_su^s_xr_x\\
   &=W^i_{s,l}V^l_ku^k_xu^s_xr+W^i_s\left(V^s_ku^k_x\right)_xr
                       +W^i_su^s_xV^k_lW^l_ju^j_xp_k\\
&\hphantom{ciao}-V^i_{j,k}u^j_xW^k_su^s_xr\\
   &\hphantom{ciao}-V^i_jW^j_{s,l}u^l_xu^s_xr-V^i_jW^j_su^s_{xx}r
                       -V^i_jW^j_su^s_xW^k_lu^l_xp_k
\end{align*}

The coefficient of $r$ vanishes because when defining $B$ we required that
$\varphi^i=W^i_ju^j_x$ is a symmetry (or a commuting flow) of the
hydrodynamic type system.

The only other change with respect to the local case are the coefficients of
$u^l_xu^m_xp_j$. It is easy to calculate that, up to differential consequences
of the conditions 1 and 2 in the statement of the Theorem, such coefficients
are equal to
\begin{equation}
  \label{eq:45}
  R^{ij}_{kl}V^k_m+R^{ij}_{km}V^k_l+
      W^i_lV^j_kW^k_m+W^i_mV^j_kW^k_l
      -V^i_kW^k_lW^j_m-V^i_kW^k_mW^j_l
\end{equation}
hence they vanish due to the Hamiltonian property of $B$ \cite{F95:_nl_ho} and
the condition $W^i_s V^s_j = W^s_j V^i_s$ from $\ell_F(\varphi)=0$.
\end{proof}

We observe that the conditions of compatibility between a non-local operator
\eqref{eq:34} and a hydrodynamic-type system are the same as the condition for
a local operator \eqref{eq:5} with the only additional requirement that the
non-local part is constructed by symmetries of the system of PDEs.

\begin{remark}\label{sec:non-local-operators}
  The above construction of non-local variables allows to avoid integrals
  $\partial_x^{-1}$ which have no clean differential-geometric
  interpretation. However, the construction requires symmetries of the system
  \eqref{eq:3}; this implies that, unlike the local case, non-local operators
  are strongly linked to an underlying system of PDEs.
\end{remark}
\begin{remark}
  One might try to solve the systems of compatibility conditions in
  Theorem~\ref{thm2} or in Theorem~\ref{sec:non-local-operators-1} for a given
  operator $A$ or $B$ and unknown functions $V^i_j$. Usually, this approach
  does not work: there are too many systems that are Hamiltonian with respect
  to a single first-order local or non-local HHOs.  We will see that the
  situation is completely different for higher order HHOs.
\end{remark}

\section{Second-order homogeneous Hamiltonian \\ operators}
\label{sec:second-order-hamilt}

In this section we will derive compatibility conditions between the class of
second order HHOs and systems of PDEs of hydrodynamic type. Higher order HHOs
were introduced in \cite{DubrovinNovikov:PBHT} as a generalization of
first-order HHOs; in particular, second order
HHOs have the form $C=(C^{ij})$ where
\begin{equation}
  \label{eq:19}
  C^{ij}= g^{ij}\partial_x^2 + b^{ij}_k u^k_x \partial_x + c^{ij}_k u^k_{xx} +
  c^{ij}_{kh}u^k_x u^h_x,
\end{equation}
where the coefficients $g^{ij}$, $b^{ij}_k$, $c^{ij}_k$, $c^{ij}_{kh}$ depend
on field variables $(u^j)$ only.  The coefficients of the above operator $C$
transform as differential-geometric objects; in the non degenerate case
$\det(g^{ij})\neq 0$, the symbols $\Gamma^i_{jk}= - g_{js}c^{si}_k$ (where
$(g_{ij})$ is the inverse matrix of $g^{ij}$) transform as the Christoffel
symbols of a linear connection. It was conjectured by S.P. Novikov (also for
higher order HHOs) that if $C$ is a Hamiltonian operator then $\Gamma^i_{jk}$
is symmetric and flat. This statement was proved in
\cite{doyle93:_differ_poiss,potemin86:_poiss} (but see also \cite[p.\
76]{mokhov98:_sympl_poiss}). It can also be proved that in flat coordinates of
that connection a Hamiltonian operator $C$ takes the canonical form:
\begin{equation}
  \label{eq:24}
  C^{ij} = \partial_x g^{ij}\partial_x,
\end{equation}
where
\begin{equation}
  \label{eq:8}
  g_{ij} = T_{ijk}u^k + g_{0ij};
\end{equation}
the residual conditions for $C$ to be Hamiltonian are that $T_{ijk}$,
$g_{0ij}$ are constant and skew-symmetric with respect to any pair of indices.

Since a natural choice of Casimirs of the above operators is just the set of
coordinates $u^i$, it is natural to assume that the hydrodynamic type system is
conservative:
\begin{equation}
  \label{eq:25}
  F^i = u^i_t - (V^i)_x = 0,
\end{equation}
where $V^i=V^i(u^k)$.

Now, let us introduce potential coordinates $b^i_x=u^i$; then the equation can
be rewritten as
\begin{equation}
  \label{eq:10}
F^i=  b^i_t - V^i(b_x) = 0.
\end{equation}
Using the standard formula for the coordinate change
$\ell_b\circ C\circ \ell^*_b$ (see e.g. \cite{olver88:_darboux_hamil}) where
the map $b=(b^i)$ is given by $b^i=\partial_x^{-1}u^i$ and
$\ell_b=\partial_x^{-1}$ we have
\begin{equation}
  \label{eq:11}
  C^{ij} = - g^{ij}(b^k_x),
\end{equation}
so that the operator becomes of order $0$.

The linearization of $F$ is
\begin{equation}
\ell_F(\varphi)^i=\partial_t\varphi^i - V^i_{,j}\partial_x\varphi^j
\end{equation}
and the cotangent covering is 
\begin{equation}
\mathcal{T}^*:\begin{cases}b^i_t=V^i(b_x)\\
p_{i,t}=V^j_{,i}p_{j,x}+V^j_{,il}b^l_{xx}p_j\end{cases}
\end{equation}

\begin{theorem}
  The compatibility conditions $\ell_F(C(\mathbf{p}))=0$ for a second-order HHO
  $C$ in canonical form~\eqref{eq:1} to be a Hamiltonian operator for the
  system \eqref{eq:2} are
  \begin{subequations}\label{eq:37}
  \begin{gather}
    \label{eq:4}
    g_{qj}V^j_{,p} + g_{pj}V^j_{,q} = 0,
    \\
    \label{eq:38}
    g_{qk}V^k_{,pl} + g_{pq,k}V^k_{,l} + g_{qk,l}V^{k}_{,p}= 0.
  \end{gather}
\end{subequations}
\end{theorem}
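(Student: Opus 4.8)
The plan is to exploit the reduction carried out just before the statement: in the potential coordinates $b^i$, with $b^i_x=u^i$, the canonical operator becomes, by \eqref{eq:11}, the order-zero operator $C(\mathbf{p})^i=-g^{ij}(b_x)\,p_j$, where $g^{ij}$ is the inverse of the skew-symmetric matrix $g_{ij}=T_{ijk}b^k_x+g_{0ij}$. I would substitute this into the linearization, taking $\varphi=C(\mathbf{p})$, so that
\[
\ell_F(C(\mathbf{p}))^i=\partial_t\bigl(-g^{ij}p_j\bigr)-V^i_{,j}\,\partial_x\bigl(-g^{jk}p_k\bigr).
\]
Expanding the derivatives and remembering that $g^{ij}$ depends only on $u^k=b^k_x$, the chain rule yields terms in $b^k_{xt}$, $b^l_{xx}$, $p_{j,t}$ and $p_{k,x}$. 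The time derivatives are then eliminated on the covering by the relations $b^k_{xt}=\partial_xV^k=V^k_{,l}b^l_{xx}$ and $p_{j,t}=V^l_{,j}p_{l,x}+V^l_{,jm}b^m_{xx}p_l$. After this reduction the expression is linear in the functionally independent jet monomials $p_{k,x}$ and $b^l_{xx}p_j$ on $\mathcal{T}^*$; hence $\ell_F(C(\mathbf{p}))=0$ splits into the separate vanishing of the coefficient of each monomial.

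The coefficient of $p_{k,x}$ collapses to $g^{jk}V^i_{,j}=g^{ij}V^k_{,j}$. Lowering the two free indices with $g_{ip}$ and $g_{kq}$, and using $g_{ip}g^{ij}=-\delta^j_p$ together with the skew-symmetry $g_{ij}=-g_{ji}$, this becomes $g_{qj}V^j_{,p}+g_{pj}V^j_{,q}=0$, which is condition \eqref{eq:4}. The coefficient of $b^l_{xx}p_j$ is, after relabelling, $-g^{is}_{,k}V^k_{,l}-g^{ij}V^s_{,jl}+V^i_{,j}g^{js}_{,l}=0$. I would lower $i\mapsto p$ and $s\mapsto q$ using the derivative-of-inverse identity $g^{is}_{,k}=-g^{ia}g_{ab,k}g^{bs}$, which is exact here because $g_{ab}$ is linear so that $g_{ab,k}=T_{abk}$ is constant. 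The first term then contracts to $g_{pq,k}V^k_{,l}$, and the second, via $g_{ip}g^{ij}=-\delta^j_p$, to $g_{qk}V^k_{,pl}$.

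The delicate term, and the step I expect to be the main obstacle, is the third one, $V^i_{,j}g^{js}_{,l}$: after lowering it reads $-g_{ip}V^i_{,j}g^{ja}g_{aq,l}$, in which the factor $g^{ja}$ does not collapse on its own. Here I would invoke the \emph{first} compatibility condition \eqref{eq:4}, in the equivalent form $g_{ip}V^i_{,j}=-g_{ij}V^i_{,p}$, after which the contraction $g_{ij}g^{ja}=\delta^a_i$ becomes admissible and the term reduces to $g_{qk,l}V^k_{,p}$. Assembling the three lowered contributions produces exactly $g_{qk}V^k_{,pl}+g_{pq,k}V^k_{,l}+g_{qk,l}V^k_{,p}=0$, i.e. condition \eqref{eq:38}. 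The one point requiring genuine care throughout is sign bookkeeping: because $g$ is skew one has $g_{ip}g^{ij}=-\delta^j_p$ but $g_{ij}g^{ja}=+\delta^a_i$, and keeping these two contractions consistent is what makes both conditions emerge in precisely the stated form.
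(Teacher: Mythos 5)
Your proof is correct and takes essentially the same route as the paper's: substitute the order-zero form $C^i=-g^{ij}p_j$ into $\ell_F$, eliminate the $t$-derivatives on the cotangent covering, split by the independent monomials $p_{k,x}$ and $b^l_{xx}p_j$, and lower indices. Your explicit invocation of \eqref{eq:4} to collapse the recalcitrant term $V^i_{,j}g^{js}_{,l}$ is precisely the step the paper compresses into ``lowering the indices and remembering that $g_{ij}$ is skew-symmetric,'' so you have merely (and usefully) made the paper's terse final step explicit.
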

\begin{proof}
  The condition of compatibility $\ell_F(C)=0$ of the Hamiltonian operator
  $C^i=-g^{ij}p_j$ with the system~\eqref{eq:10} is
\begin{multline}\label{eq:36}
  \ell_F(C)^i=(-g^{ij}p_j)_t-V^i_{,j}(-g^{jl}p_l)_x
  \\
                        =-g^{ij}_kV^k_{,l}b^l_{xx}p_j-g^{ij}V^k_{,j}p_{k,x}
                          -g^{ij}V^k_{,jl}b^l_{xx}p_k
                          +V^i_{,j}g^{jl}_kb^k_{xx}p_l+V^i_{,j}g^{jl}p_{l,x}
                          \\
                          =0.
\end{multline}
Then, $\ell_F(C)=0$ if and only if the following two conditions hold:
\begin{gather}
\label{11}-g^{il}V^j_{,l}+g^{lj}V^i_{,l}=0,\\
\label{22}-g^{ij}_kV^k_{,l}-g^{ik}V^j_{,kl}+g^{kj}_lV^i_{,k}=0.
\end{gather}
The result is obtained by lowering the indices and remembering that $g_{ij}$ is
skew-symmetric with respect to $i$, $j$.
\end{proof}

At this stage, and having the previous experience with third-order HHO in mind
\cite{FPV17:_system_cl}, we might ask ourselves if it is possible to solve the
system~\eqref{eq:37} for any given second-order HHO in order to find systems of
hydrodynamic type that admit a Hamiltonian operator $C$ as above. In the non
degenerate case $\det (g^{ij})\neq 0$, the answer is in the affirmative for low
numbers $n$ of depedent variables. Of course, there is no second-order HHO when
$n=1$.

\paragraph{The case $n=2$.} In this case, $g_{ij}$ is a constant matrix. It can
be easily realized that the only solution of~\eqref{eq:37} is $V^i_{,j}$ a
constant for $i$, $j=1$, \dots, $n$, hence the resulting hydrodynamic type
system is linear and not interesting to our purposes.

\paragraph{The case $n=3$.} In this case $g_{ij}$, being skew-symmetric, is
always degenerate; the degenerate case will deserve a future investigation.

\paragraph{The case $n=4$.} In this case the space of $2$-forms $g_{ij}$ is
$10$-dimensional and subject to the single constraint $\det(g_{ij})\neq 0$.
Let us start by an example.
\begin{example}
  We consider the following $2$-form:
\begin{equation}
  g_{ij}=\begin{pmatrix}
    0&b^3_x&-b^2_x&0
    \\
    -b^3_x&0&b^1_x&0
    \\
    b^2_x&-b^1_x&0&1
    \\
    0&0&-1&0
  \end{pmatrix}
\end{equation}
where in~\eqref{eq:8} $T_{123}=1$, $g_{034}=1$, other coefficients are either
$1$ or $-1$ if they are related to the above coefficients by an even or odd
permutation, or they are defined to be $0$. Then, solving the system
\eqref{eq:37} for the vector of fluxes $V^i$ we obtain the following system (in
potential coordinates)
\begin{equation}
  \left\{
  \begin{array}{l}\displaystyle
    b^1_t=\frac{c_4\left(b^1_x\right)^2+\left(c_1b^2_x+c_3b^3_x+c_8\right)b^1_x
    +c_{10}b^3_x-c_1b^4_x-c_2}{b^3_x}
    \\\displaystyle
    b^2_t=\frac{c_1\left(b^2_x\right)^2+\left(c_3b^3_x+c_4b^1_x+c_8\right)b^2_x
    +c_9b^3_x+c_4b^4_x+c_6}{b^3_x}
    \\
    b^3_x=c_1b^2_x+c_3b^3_x+c_4b^1_x+c_7
    \\\displaystyle
    b^4_t=\frac{\left(c_1b^2_x+c_3b^3_x+c_4b^1_x\right)b^4_x
    +c_2b^2_x+c_5b^3_x+c_6b^1_x}{b^3_x}
\end{array}\right.
\end{equation}
where $c_i$ are parameters, $i=1$, \dots, $10$.
\end{example}

Indeed, we can provide a more general statement.
\begin{theorem}
  Let $n=4$ and $\det (g_{ij})\neq 0$. Then, for every second-order HHO the
  system~\eqref{eq:37} can be solved with respect to the unknown functions
  $V^i$, and the solution depends on at most 23 parameters.

  The resulting hydrodynamic type systems of conservation laws are linearly
  degenerate and diagonalizable, hence semi-Hamiltonian (according with
  \cite{sevennec93:_geomet,tsarev91:_hamil}). The eigenvalues of the velocity
  matrix of the systems are real and with algebraic multiplicity two.
\end{theorem}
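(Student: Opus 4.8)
The plan is to reduce the compatibility system~\eqref{eq:37} to a finite-type (Frobenius) system for the velocity matrix, integrate it explicitly, and then read off the geometric properties of the resulting flux $V=(V^i)$ directly from the structure of~\eqref{eq:4}--\eqref{eq:38}. First I would exploit the $GL(4)$-covariance of the whole construction (the operator's coefficients transform as a $2$-form and $V$ as a vector under linear changes $\bar u^i=A^i_ju^j$). Writing the skew $3$-tensor $T_{ijk}$ and the constant $2$-form $g_{0ij}$ of~\eqref{eq:8}, and dualising $T$ to a vector in dimension four, I would bring the generic nondegenerate pair $(T,g_0)$ to a canonical representative, namely the one underlying the Example above (dual vector along $e_4$, $g_0=du^3\wedge du^4$). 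This reduces~\eqref{eq:37} to an explicit linear system on the single nondegeneracy locus $\det(g_{ij})\ne 0$.

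Next I would observe that~\eqref{eq:4} is purely algebraic: it says that $S_{qp}:=g_{qj}V^j_{,p}$ is skew, i.e. the Jacobian $V'=(V^i_{,j})$ is \emph{self-adjoint with respect to the symplectic form} $g$, namely $gV'=(V')^{\mathsf{T}}g$. Since $g_{pq,k}=T_{pqk}$ is invertible in the leading slot, equation~\eqref{eq:38} can be solved for the second derivatives in terms of the first,
\begin{equation*}
  V^s_{,pl}=-g^{sq}\big(T_{pqk}V^k_{,l}+T_{qkl}V^k_{,p}\big),
\end{equation*}
and the total antisymmetry of $T$ makes the right-hand side symmetric in $(p,l)$, so this is a consistent system of finite type for $V'$. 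Its integrability conditions, together with the propagation of~\eqref{eq:4}, cut the admissible $1$-jets down to a finite-dimensional family; integrating $V'$ and adding the constants of the final quadrature gives the general solution. Carrying out this integration explicitly, conveniently with a computer algebra system, yields the stated bound of at most $23$ parameters.

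For the geometric claims I would argue structurally rather than case by case. Symplectic self-adjointness of $V'$ forces its characteristic polynomial to be the square of the Pfaffian pencil, $\det(S-\lambda g)=\mathrm{Pf}(S-\lambda g)^2$ with $\mathrm{Pf}(S-\lambda g)$ a quadratic in $\lambda$; hence every eigenvalue has algebraic multiplicity two. Generalised eigenspaces for distinct eigenvalues are $\omega$-orthogonal, so each two-dimensional eigenspace carries a nondegenerate restriction of $g$, and on it the nilpotent part of $V'$ is again symplectically self-adjoint; a short computation shows that a self-adjoint nilpotent endomorphism of a two-dimensional symplectic space must vanish, so whenever the two eigenvalues are distinct $V'$ is diagonalisable. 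Constant multiplicity two then makes each characteristic field linearly degenerate by Boillat's theorem, and a diagonalisable linearly degenerate system is semi-Hamiltonian by~\cite{sevennec93:_geomet,tsarev91:_hamil}.

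The main obstacle I foresee is twofold. The heavier part is the explicit integration and the precise parameter count, which requires verifying that the finite-type system~\eqref{eq:38} is genuinely integrable (no hidden obstructions beyond~\eqref{eq:4}) and keeping track of the constants; this is the step where computer algebra is essential. The genuinely delicate point, however, is the \emph{reality} of the eigenvalues: symplectic self-adjointness alone permits complex-conjugate double pairs, so reality does not follow from~\eqref{eq:4} and must be extracted from the explicit solution, by showing that the discriminant of the quadratic $\mathrm{Pf}(S-\lambda g)$ is nonnegative on the nondegeneracy locus. Establishing this positivity --- equivalently, exhibiting real Riemann invariants for the integrated family --- is the crux on which hyperbolicity, diagonalisability over $\mathbb{R}$, and the semi-Hamiltonian conclusion ultimately rest.
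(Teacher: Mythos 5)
Your proposal is correct in its computational core and there it coincides with the paper, whose own proof is deliberately terse: existence and the parameter count are obtained ``by computer-solving of the system~\eqref{eq:37}'', linear degeneracy is then \emph{verified} by checking the invariant criterion~\eqref{eq:41}, and diagonalizability by checking that the Haantjes tensor \cite{haantjes55:_x} of $(V^i_{,j})$ vanishes identically; reality and multiplicity of the eigenvalues are read off from the explicit solutions. Where you genuinely depart from the paper is in replacing these verifications by structural linear algebra, and that part is sound and in places sharper. Indeed \eqref{eq:4} says exactly $gV'=(V')^{\mathsf{T}}g$ with $g$ skew, so $S=gV'$ is skew and $\det(V'-\lambda I)=\left(\operatorname{Pf}(S-\lambda g)/\operatorname{Pf}(g)\right)^2$, which gives algebraic multiplicity two for \emph{every} solution with no computation at all; your observation that \eqref{eq:38} is of finite type is also right (its antisymmetrization in $(p,l)$ vanishes identically by the total skewness of $T_{ijk}$), and it yields an a priori bound of $6+4=10$ parameters per fixed canonical operator --- the skew matrix $S(u_0)$ plus the values $V^i(u_0)$ --- consistent with the ten constants $c_1,\dots,c_{10}$ in the paper's Example and a fortiori with the stated bound of $23$. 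The two-dimensional nilpotent argument ($gN$ skew on a symplectic plane forces $N=cI$, hence $N=0$) is likewise correct.

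Three concrete gaps remain. First, the normal-form step: you cannot reduce \emph{every} nondegenerate second-order HHO to the single representative underlying the Example; the $GL(4)$-action on pairs $(T,g_{0})$ in~\eqref{eq:8} has several strata even on the locus $\det(g_{ij})\neq 0$ (e.g.\ $T=0$, $g_0$ decomposable, the dual vector of $T$ in special position relative to $g_0$), so you must either enumerate normal forms or run the Frobenius analysis with general $(T,g_0)$, as the paper in effect does by brute force. Second, your diagonalizability argument works only off the discriminant locus where the two double eigenvalues collide; by continuity it recovers the vanishing of the Haantjes tensor everywhere, but not pointwise semisimplicity at collision points, which the paper's direct Haantjes check sidesteps. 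Third --- and you flag this honestly --- reality of the eigenvalues does not follow from~\eqref{eq:4}, since symplectic self-adjointness allows complex-conjugate double pairs; note that this gap also infects your linear-degeneracy step, because Boillat's constant-multiplicity theorem presupposes real eigenvalues with smooth characteristic fields, whereas the paper's criterion~\eqref{eq:41} is a polynomial identity in $(V^i_{,j})$ that can be checked on the explicit solutions without first knowing the spectrum is real. So your route buys conceptual explanations (why multiplicity two, why diagonalizable) that the paper only observes computationally, but as a complete proof it still rests, exactly as the paper does, on the explicit computer-assisted integration for existence, the parameter count, and reality.
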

\begin{proof}
  The proof of the existence of solutions $V^i$ is achieved by computer-solving
  of the system~\eqref{eq:37}.

  Then, we check that the condition
  \begin{equation}
    \label{eq:41}
    (\nabla f_1)(V^i_{,j})^4 + (\nabla f_2)(V^i_{,j})^3 +
    (\nabla f_3)(V^i_{,j})^2 + (\nabla f_4)(V^i_{,j}) = 0
  \end{equation}
  holds, where
  $\det(\lambda\delta^i_j - V^i_{,j}) = \lambda^4 + f_1\lambda^3 + f_2\lambda^2
  + f_3\lambda + f_4$; that means that the hydrodynamic type system defined by
  $V^i$ is linearly degenerate.

  Finally, we check that the Haantijes tensor~\cite{haantjes55:_x} of the
  tensor $V^i_{,j}$ vanishes identically; that ensures the diagonalizability of
  the hydrodynamic type system.
\end{proof}

\begin{remark}
  Semi-Hamiltonian systems of hydrodynamic type are integrable by the
  generalized hodograph transform \cite{tsarev91:_hamil}. However, that method
  has been developed for hydrodynamic type systems with real distinct
  eigenvalues; so, strictly speaking, we cannot say that our systems are
  integrable. However, the fact that there are coinciding eigenvalues is not a
  strong restriction to the applicability of the generalized hodograph
  transform, as recent results show \cite{xue20:_quasil_jordan}.
\end{remark}

We recall that the inverse $C_{ij} = - g_{ij}(b^k_x)$ of the Hamiltonian
operator \eqref{eq:11} is a symplectic operator of order $0$
\cite{mokhov98:_sympl_poiss}.

\section{Third-order homogeneous Hamiltonian operators}
\label{sec:III}

In this section we will summarize the results of the papers
\cite{FPV17:_system_cl,m.v.19:_bi_hamil_orien_assoc} in order to have a
complete picture of the range of applicability of the method exposed in this
paper.

\subsection{Local operators}
\label{sec:local-operators}

Local third-order homogeneous Hamiltonian operators $D=(D^{ij})$ can always be
brought to the following canonical form by a transformation
$\bar{u}^i=\bar{u}^i(u^j)$ (in the non-degenerate case $\det(g^{ij})\neq 0$):
\begin{equation}
  \label{eq:18}
  D^{ij}=\partial_x(g^{ij}\partial_x + c^{ij}_k u^k_x)\partial_x
\end{equation}
\cite{doyle93:_differ_poiss,potemin97:_poiss,potemin91:PhDt}. Indeed, this is a
transformation to flat coordinates of a linear symmetric connection, as in the
case of second order operators. In this case the conditions for $D^{ij}$ to be
Hamiltonian are \cite{FPV14}:
\begin{subequations}\label{eq:32}
\begin{align}
  \label{eq:223}
  &g_{ij}=g_{ji}, \\
  \label{eq:28}
  &c_{nkm}=\frac{1}{3}(g_{mn,k}-g_{kn,m}),\\
  \label{eq:233}
  &g_{ij, k}+g_{jk, i}+g_{ki, j}=0, \\
  \label{eq:27}
  &c_{nml,k}+c^s_{ml}c_{snk} =0.
\end{align}
\end{subequations}
where $(g_{ij})^{-1}=(g^{ij})$ and $c_{ijk}=g_{iq}g_{jp}c_{k}^{pq}$. We recall
that the pseudo-Riemannian metrics $g_{ij}$ fulfilling \eqref{eq:233} represent
quadratic line complexes in Monge form \cite{FPV14}.

\begin{theorem}[\cite{FPV17:_system_cl}]
  The compatibility conditions $\ell_F(D(\mathbf{p}))=0$ for a third-order HHO
  $D$ \eqref{eq:18} to be a Hamiltonian operator for the hydrodynamic type
  system of conservation laws \eqref{eq:25} are equivalent to the following
  system:
  \begin{subequations}\label{eq:47}
    \begin{align}
           & g_{im}V^{m}_{,j}=g_{jm}V^m_{,i},\\
           & c_{mkl}V^m_{,i}+c_{mik}V^m_{,l}+c_{mli}V^m_{,k}=0,\\
           &V^k_{,ij}=g^{ks}c_{smj}V^m_{,i}+g^{ks}c_{smi}V^{m}_{,j},
    \end{align}
  \end{subequations}
\end{theorem}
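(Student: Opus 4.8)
The plan is to mirror the second-order argument and pass to potential coordinates $b^i=\partial_x^{-1}u^i$. In these coordinates the coordinate-change rule $\ell_b\circ D\circ\ell_b^*$, with $\ell_b=\partial_x^{-1}$ and $\ell_b^*=-\partial_x^{-1}$, cancels the two outer factors $\partial_x$ in \eqref{eq:18} and turns the third-order operator into the \emph{first-order} operator
\[
  D(\mathbf{p})^i = -\big(g^{ij}p_{j,x} + c^{ij}_k b^k_{xx}p_j\big),
\]
where $g^{ij}$, $c^{ij}_k$ and the flux $V^i$ are all functions of $b_x$. The linearization $\ell_F(\varphi)^i=\partial_t\varphi^i-V^i_{,j}\partial_x\varphi^j$ and the cotangent covering have already been written in potential coordinates above, so the task reduces to imposing $\ell_F(D(\mathbf{p}))^i=\partial_t(D(\mathbf{p})^i)-V^i_{,m}\partial_x(D(\mathbf{p})^m)=0$ on that covering.

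The computation is the direct analogue of the Proposition of Section~\ref{sec:comp-cond-first}. I would expand $\partial_t(D(\mathbf{p})^i)$ by the Leibniz rule, taking care that every coefficient depends on $b_x$, so that $\partial_t$ acting on $g^{ij}$, $c^{ij}_k$ or $V^i$ produces $\partial_t b^k_x=\partial_x V^k=V^k_{,l}b^l_{xx}$ rather than $b^k_t$; the time derivatives $p_{j,t}$, $p_{j,tx}$ and $b^k_{xxt}=\partial_x^2V^k$ are then eliminated using the covering equations. The result is linear in $p_l$, $p_{l,x}$, $p_{l,xx}$ with coefficients polynomial in $b^k_{xx}$, $b^k_{xxx}$ and $b^c_{xx}b^d_{xx}$. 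Setting the coefficient of each independent monomial to zero yields four tensor equations, arising respectively from $p_{l,xx}$, from $b^d_{xx}p_{l,x}$, from $b^c_{xxx}p_l$ and from $b^c_{xx}b^d_{xx}p_l$.

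The coefficient of $p_{l,xx}$ gives $g^{ij}V^l_{,j}=g^{lj}V^i_{,j}$; lowering indices with the \emph{symmetric} metric $g$ (by \eqref{eq:223}) turns this into condition~(a). To obtain (c) I would take the coefficient of $b^d_{xx}p_{l,x}$, contract it with $g_{ai}$ so as to free the second derivative $V^l_{,ad}$, and then eliminate every metric derivative $g^{ij}_{,k}$ in favour of $c$ through \eqref{eq:28}: the identity $g_{aq,b}=c_{abq}-c_{qab}$, which follows from \eqref{eq:28} together with the cyclic relation \eqref{eq:233}, is exactly what is needed to rewrite everything in terms of $c$ and reproduce $V^k_{,ij}=g^{ks}c_{smj}V^m_{,i}+g^{ks}c_{smi}V^m_{,j}$. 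The coefficient of $b^c_{xxx}p_l$, after using (c) to remove the second derivatives of $V$, should collapse---via the skew-symmetry of $c$ in its last two indices and condition~(a)---to the purely algebraic cyclic relation~(b). Finally, exactly as in the first of the first-order Lemmas, the coefficient of $b^c_{xx}b^d_{xx}p_l$ (the only one carrying the third derivatives $V^l_{,jcd}$) is not independent: differentiating (c) and substituting, together with the flatness/associativity condition \eqref{eq:27}, should exhibit it as a differential consequence of (a)--(c).

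The main obstacle is precisely this last reduction and, more generally, the bookkeeping in passing from the four raw coefficient equations to the three stated conditions. The delicate points are: (i) the systematic replacement of $\partial g$ by $c$ via \eqref{eq:28} and \eqref{eq:233}, which is what makes the final geometry purely ``$c$-linear'' as in (b)--(c); (ii) checking that the cross terms quadratic in $c$ reassemble into the curvature-type combination annihilated by \eqref{eq:27}, in exact analogy with the vanishing of $T^{ij}_{lm}$ and of $R^{ij}_{km}V^k_l+R^{ij}_{kl}V^k_m$ in the first-order Lemmas; and (iii) confirming that the over-determined equation (c) for the second derivatives of $V$ is self-consistent, its symmetry in the lower indices being what forces (b) and renders the $b^c_{xx}b^d_{xx}p_l$ equation redundant. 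As in Tsarev's case, no Hamiltonian density enters at any stage.
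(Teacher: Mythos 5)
Your overall route is the one the paper intends (its own proof is a pointer to \cite{FPV17:_system_cl} plus the analogy with the lower-order cases): pass to potential coordinates, where $D$ becomes the first-order object $-(g^{ij}p_{j,x}+c^{ij}_kb^k_{xx}p_j)$ --- exactly the local part of \eqref{eq:451} --- then expand $\ell_F(D(\mathbf{p}))$ on the covering and extract the coefficients of $p_{l,xx}$, $b^d_{xx}p_{l,x}$, $b^c_{xxx}p_l$, $b^c_{xx}b^d_{xx}p_l$. Your conversion identity $g_{aq,b}=c_{abq}-c_{qab}$ is correct (it does follow from \eqref{eq:28} and \eqref{eq:233}), the $p_{l,xx}$ coefficient does give the first condition, and the $b^c_{xx}b^d_{xx}p_l$ equation is indeed the one that ends up being a differential consequence via \eqref{eq:27}, in analogy with the first-order lemmas. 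So far this is the paper's argument reconstructed.

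There is, however, a concrete misassignment that would make one step fail as written: condition (b) does \emph{not} come from the coefficient of $b^c_{xxx}p_l$. That coefficient is $-g^{ij}V^l_{,jc}-c^{il}_kV^k_{,c}+V^i_{,m}c^{ml}_c=0$; lowering with $g_{ai}g_{bl}$ (mind the paper's twisted convention $c_{ijk}=g_{iq}g_{jp}c^{pq}_k$, which gives $g_{ai}g_{bl}c^{il}_k=c_{bak}$) and using condition (a) on the last term turns it into
\begin{equation*}
  g_{bl}V^l_{,ac}=c_{bma}V^m_{,c}+c_{bmc}V^m_{,a},
\end{equation*}
which is precisely condition (c) itself; substituting (c) back in, it collapses to $0=0$ by the skew-symmetry of $c$ in its last two indices --- not to the cyclic relation (b). Condition (b) actually sits inside the coefficient of $b^d_{xx}p_{l,x}$, which you earmarked for (c) alone: that equation assigns to $g_{bl}V^l_{,ad}$ (note the factor $2$, from the two sources of $V^l_{,jd}b^d_{xx}p_{l,x}$ in $p_{j,xt}$) an expression that is not manifestly symmetric in $(a,d)$; its symmetric part reproduces (c), while the vanishing of its skew part, after the $g$--$c$ conversion, is exactly $c_{mab}V^m_{,n}+c_{mna}V^m_{,b}+c_{mbn}V^m_{,a}=0$. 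Your point (iii) --- that the symmetry of the second derivatives of $V$ in the lower indices is what forces (b) --- is in fact the correct mechanism, but it contradicts your main-text bookkeeping; executed literally, your plan would derive (c) twice and never produce (b). The repair stays entirely within your framework: take (c) from the $b^c_{xxx}p_l$ coefficient (or from the symmetric part of the $b^d_{xx}p_{l,x}$ one), and (b) from the skew part of the $b^d_{xx}p_{l,x}$ coefficient.
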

\begin{proof}
  The proof goes exactly like in the case of first-order and second-order HHOs,
  being only considerably more complicated under the viewpoint of the
  calculations. See \cite{FPV17:_system_cl} for details.
\end{proof}

It is also proved in \cite{FPV17:_system_cl} that, given a third-order HHO,
there exists a multiparameter family of systems of conservation laws
\eqref{eq:25} solving~\eqref{eq:47}. More precisely, the system~\eqref{eq:47}
is reduced to a linear algebraic system.  The solutions of the system admit the
third-order HHO as its Hamiltonian operator; non-local Hamiltonian, momentum and
Casimirs are provided.

It is proved that systems of conservation laws admitting a third-order HHO are
\emph{linearly degenerate} and \textbf{non-diagonalizable} (at difference with
the second-order case). The invariance of the systems of conservation laws
together with their third-order HHOs is up to projective reciprocal
transformations of the type \cite{FPV14}
\begin{gather}
  \label{eq:39}
  d\tilde{x} = \Delta dx,\qquad d\tilde{t} = dt,
  \\\notag
  \tilde{u}^i = \frac{a^i_j u^j + a^i_0}{\Delta},\qquad
  \Delta = a^0_j u^j + a^0_0,
\end{gather}
where $a^i_j$, $a^i_0$, $a^0_j$, $a^0_0$ are constants.

This makes the above systems of conservation laws interesting objects of
study. The integrability of such systems is still an open question, although in
some cases it holds true by Lax pairs \cite{FPV17:_system_cl} or bi-Hamiltonian
formalism by another HHO which is local, first-order and compatible with the
third-order HHO in the case of WDVV equations (see \cite{PV15} and references
therein).

\subsection{Non-local operators}

Non-local third order HHOs have been considered in the literature; see
\cite{casati19:_hamil} and references therein for a detailed study.

An interesting instance of such operators is the Hamiltonian operator of
the Oriented Associativity Equation. Such a system can be written as the
following hydrodynamic type system of conservation laws:
\begin{eqnarray}
q_{t}^{1} &=&q_{x}^{2},\text{ \ \ \ \ \ \ \ \ }q_{t}^{2}=\partial _{x}\frac{%
q^{2}q^{6}+q^{1}q^{4}-q^{2}q^{3}}{q^{5}},  \notag \\
&&  \notag \\
q_{t}^{3} &=&q_{x}^{4},\text{ \ \ \ \ \ \ \ \ }q_{t}^{4}=\partial _{x}\frac{%
q^{2}+q^{4}q^{6}}{q^{5}},  \label{six} \\
&&  \notag \\
q_{t}^{5} &=&q_{x}^{6},\text{ \ \ \ \ \ \ \ \ }q_{t}^{6}=\partial _{x}\frac{%
(q^{6})^{2}-q^{3}q^{6}+q^{4}q^{5}-q^{1}}{q^{5}}.  \notag
\end{eqnarray}%
The system was introduced in this form in \cite{pavlov14:_orien}, where a
first-order local HHO was provided. In \cite{m.v.19:_bi_hamil_orien_assoc} the
following ansatz was introduced for a non-local third-order HHO $E=(E^{ij})$:
\begin{equation}  \label{casimir-nl}
  E^{ij}=
  \partial_x^{}(g^{ij}\partial_x^{} + c^{ij}_kq^k_x + c^\alpha
  w^i_{\alpha k}q^k_x\partial_x^{-1}w^j_{\alpha h}q^h_x)\partial_x^{}
\end{equation}
and $w^i_{\alpha k} = w^i_{\alpha k}(q^j)$, with $c^\alpha\in\mathbb{R}$.  The
conditions on $E$ to be Hamiltonian are~\cite{casati19:_hamil}
\begin{subequations}\label{eq:35}
  \begin{align}
    \label{eq:29}
    &w_{\alpha ij}+w_{\alpha ji}=0, \\
    \label{eq:30}
    &w_{\alpha ij, l}-c^s_{ij}w_{\alpha sl}=0, \\
    \label{eq:31}
    &c_{nml,k}+c^s_{ml}c_{snk} + c^\alpha w_{\alpha ml}w_{\alpha nk}=0,
  \end{align}
\end{subequations}
in addition to~\eqref{eq:223}, \eqref{eq:28}, \eqref{eq:233} (of
course,~\eqref{eq:31} is a modification of ~\eqref{eq:27}), where
$w_{ij}=g_{is}w^s_j$.  We remain with the problem of determining the tensors
$w^i_{\alpha j}$.  In this case, the condition $\ell_F(E(\mathbf{p}))=0$ of
compatibility between $E$ and the Oriented Associativity equation~\eqref{six}
is equivalent to the system~\eqref{eq:47} supplemented by the equations
  \begin{subequations}
\label{eq:467}
\begin{align}  \label{eq:33}
\begin{split}
& - w^i_{\alpha h,k}V^k_m - w^i_{\alpha m,k}V^k_h - w^i_{\alpha k}V^k_{m,h}
\\
&\hphantom{+}- w^i_{\alpha k}V^k_{h,m} + V^i_k w^k_{\alpha m,h} + V^i_k
w^k_{\alpha h,m} = 0
\end{split}
  \\
  \label{eq:43}
& -w^i_{\alpha k} V^k_h + V^i_k w^k_{\alpha h} =0
\end{align}
\end{subequations}
The above conditions~\eqref{eq:467} are equivalent to the fact that
$\varphi^i=w^i_{\alpha j}(\mathbf{b}_x)b^j_{xx}$ are symmetries of the
system~\eqref{eq:10}, which is the transformed system \eqref{eq:25} after the
potential substitution $u^i=b^i_x$. Indeed, it can be proved that any such
symmetry yields the conservation law $r_\alpha$ on the cotangent covering that
is determined by
\begin{equation}
  \label{eq:40}
  r_{\alpha t} = V^i_j w^j_{\alpha k} b^k_{xx}p_i,
  \quad
  r_{\alpha x} = w^i_{\alpha k} b^k_{xx} p_i.
\end{equation}
The non-local variables $r_\alpha$ allow us to represent the operator $E$ as
\begin{equation}  \label{eq:451}
E^i(\mathbf{p}) = -g^{ij}p_{j,x} - c^{ij}_kb^k_{xx}p_j - c^\alpha
w^i_{\alpha k}b^k_{xx}r_\alpha.
\end{equation}
The solution $g_{ij}$ of the system~\eqref{eq:47} for the Oriented
Associativity Equation~\eqref{six} is unique; indeed, $g_{ij}$ turns out to be
a Monge metric of a quadratic line complex, a fact already observed for WDVV
equations. The non-local part of the operator has three summands generated by
two symmetries of \eqref{six}. See \cite{m.v.19:_bi_hamil_orien_assoc} for the
detailed expression of the operator $E$.

\section{Conclusions}
\label{sec:conclusions}

In this paper we showed how the cotangent covering of a system of PDEs can help
to find geometrically invariant conditions of compatibility with homogeneous
Hamiltonian operators. This is not the only domain of applicability of the
method: indeed, the following areas might benefit of a similar approach.
\begin{itemize}
\item Multidimensional HHOs are considered in the literature, and are present
  in a number of examples (see, for example,
  \cite{mokhov98:_sympl_poiss,ferapontov14:_hamil}). The cotangent covering
  might be used to relate the operators to hydrodynamic type systems in more
  than two independent variables.
\item Dually, homogeneous symplectic operators could be considered
  \cite{dorfman91:_local}; in this case the tangent covering should be
  employed. See \cite{KVV17} for more details.
\item Non-homogeneous cases might be considered, splitting them into
  homogeneous components with different scaling. That could be a general
  framework for operators of KdV-type \cite{LSV:bi_hamil_kdv}.
\end{itemize}

As a by-product of the systematic presentation of results for HHOs in this
paper, we obtained a new family of hydrodynamic type systems associated with
second-order HHOs. It is highly likely that the systems are integrable, being
semi-Hamiltonian and endowed with a second-order HHO. An important task would
be the integration of the systems by the generalized hodograph transform.  By
analogy with \cite{m.v.19:_bi_hamil_orien_assoc}, a non-local ansatz for a
second order HHO can be easily figured out.

Along the lines of \cite{FPV17:_system_cl}, we \emph{conjecture} that both the
second-order HHOs and the associated hydrodynamic type systems could be
invariant with respect to projective reciprocal
transformations~\eqref{eq:39}. If this turned out to be true, then that might
be another sign (together with similar results for third-order operators, see
Subsection~\ref{sec:local-operators}) that projective geometry underlies the
deformation theory as developed by B.A. Dubrovin and co-workers.

\section*{Acknowledgments}

We thank B.A. Dubrovin, E.V. Ferapontov, J.S. Krasil'shchik, P.Lorenzoni,
M.V. Pavlov, A.M. Verbovetsky for stimulating discussions.

We acknowledge the support of Dipartimento di Matematica e Fisica ``E. De
Giorgi'' of the Universit\`a del Salento, of Istituto Nazionale di Fisica
Nucleare by IS-CSN4 Mathematical Methods of
Nonlinear Physics, of GNFM of Istituto Nazionale di Alta Matematica\\
\url{http://www.altamatematica.it}.


\providecommand{\cprime}{\/{\mathsurround=0pt$'$}}
  \providecommand*{\SortNoop}[1]{}

\end{document}